\newlength{\figwidth}
\newtheorem{definition}{Definition}
\newtheorem{theorem}{Theorem}
\newtheorem{lemma}{Lemma}
\newtheorem{corollary}{Corollary}
\newcommand{\nop}[1]{}
\newcommand{\fds}{\textsc{fd}s}
\newcommand{\fd}{\textsc{fd}}
\newcommand{\cfds}{\textsc{cfd}s}
\newcommand{\md}{\textsc{md}}
\newcommand{\mds}{\textsc{md}s}
\newcommand{\ea}{\textsc{ea}}
\newcommand{\eps}{\textsc{eps}}
\newcommand{\epsc}{\textsc{epsc}}
\newcommand{\aps}{\textsc{aps}}
\newcommand{\apsi}{\textsc{apsi}}
\begin{document}

\title{Discovering Matching Dependencies}


\maketitle

\begin{abstract}

The concept of \emph{matching dependencies} (\mds) is recently proposed for
specifying matching rules for object identification. Similar to the
functional dependencies (with conditions), \mds\ can also be applied
to various data quality applications such as violation detection.
In this paper, we study the problem of discovering matching dependencies from a given database instance. First, we formally define the measures, support and confidence, for
evaluating utility of \mds\ in the given database instance. Then, we study the
discovery of \mds\ with certain utility requirements of support and confidence.
Exact algorithms are developed, together with pruning strategies to improve the time performance.
Since the exact algorithm has to traverse all the data during the computation, we propose an approximate solution which only use some of the data. A bound of relative errors introduced by the approximation is also developed.
Finally, our experimental evaluation demonstrates the efficiency of
the proposed methods.

\end{abstract}




\section{Introduction}\label{sect_introduction}

Recently, data quality has become a hot topic in database community due to huge amount of ``dirty'' data originated from different resources (see~\cite{DBLP:books/sp/dcsa/Batini06} for a survey). These data are often ``dirty'', including inconsistencies, conflicts, and errors, due to various erroneous introduced by human and machines. In addition to cost of dealing the huge volume of data, manually detecting and removing ``dirty'' data is definitely out of practice because human proposed cleaning methods may introduce inconsistencies again. Therefore, data dependencies, which have been widely used in the relational database design to set up the integrity constraints, have been revisited and revised to capture wider inconsistencies in the data.
For example, consider a $\mathsf{Contacts}$ relation with the schema:
$$\mathsf{Contacts(SIN, Name, CC, ZIP, City, Street)}$$
The following functional dependency $\mathsf{fd}$ specifies a constraint that for any two tuples in $\mathsf{Contacts}$, if they have the same $\mathsf{ZIP}$ code, then these two tuples have the same $\mathsf{City}$ as well.
Recently, \emph{functional dependencies} (\fds) have been extended to \emph{conditional functional dependencies} (\cfds)~\cite{DBLP:conf/icde/BohannonFGJK07}, i.e., \fds\ with conditions, which have more expressive power. The basic idea of these extensions is making the \fds, originally hold for the whole table, valid only for a set of tuples. For example, the following $\mathsf{cfd}$ specifies that only in the condition of country code $\mathsf{CC=44}$, if two tuples have the same $\mathsf{ZIP}$, then they must have same $\mathsf{Street}$ as well.
\begin{eqnarray*}
\mathsf{fd}  & : & [\mathsf{ZIP}]\rightarrow[\mathsf{City}]\\
\mathsf{cfd} & : & [\mathsf{ZIP}, \mathsf{CC=44}]\rightarrow[\mathsf{Street}]
\end{eqnarray*}
These dependency constraints can be used to detect data violations~\cite{DBLP:conf/vldb/CongFGJM07}.
For instance, we can use the above $\mathsf{fd}$ to detect violations in an instance of $\mathsf{Contacts}$ in Table~\ref{table_data_relation}. For the tuples $t_5$ and $t_6$ with the same values of
$\mathsf{ZIP}=\mathsf{021}$, they have different values of
$\mathsf{City}$, which are then detected as violations of the above $\mathsf{fd}$.

Although functional dependencies (and their extension with conditions) are very useful in determining data inconsistency and repairing the ``dirty'' data~\cite{DBLP:conf/vldb/CongFGJM07}, they check the specified attribute value agreement based on \emph{exact match}. For example, with the above $\mathsf{cfd}$, tuples that have $\mathsf{CC=44}$ and the same value on $\mathsf{ZIP}$ attribute will be checked to see whether they have exactly matched values on $\mathsf{Street}$. Obviously, this strict exact match constraint limits usage of \fds\ and \cfds, since real-world information often have various representation formats. For example, the tuples $t_2$ and $t_3$ in $\mathsf{Contacts}$ table will be detected as ``violations'' of the $\mathsf{cfd}$, since they have ``different'' $\mathsf{Street}$ values but agree on $\mathsf{ZIP}$ and $\mathsf{CC=44}$. However, ``No.2, Central Rd.'' and ``\#2, Central Rd.'' are exactly the ``same'' street in the real-world with different representation formats.

To make dependencies adapt to this real-world scenario, i.e., to be tolerant of various representation formats,  Fan~\cite{DBLP:conf/pods/Fan08} proposed a new concept of data dependencies, called \emph{matching dependencies} (\mds). Informally, a matching dependency targets on the fuzzy values like
text attributes and defines the dependency between two set of attributes according to their matching quality measured by some matching operators
(see~\cite{DBLP:journals/expert/BilenkoMCRF03} for a survey), such as \emph{Euclidean distance} and \emph{cosine similarity}. Again, in $\mathsf{Contacts}$ example, we may have a \md\ as
\begin{eqnarray*}
\mathsf{md_1} & : & ([\mathsf{Street}]\rightarrow[\mathsf{City}], <0.8, 0.7>)
\end{eqnarray*}
which states that for any two tuples from $\mathsf{Contacts}$, if they agree on attribute $\mathsf{Street}$ (the matching similarity, e.g. \emph{cosine similarity}, on the attribute $\mathsf{Street}$ is greater than a threshold $0.8$), then the corresponding $\mathsf{City}$ attribute should match as well (i.e. similarity on $\mathsf{City}$ is greater than the corresponding threshold $0.7$).

 \begin{table}[t]
  \caption{Example of $\mathsf{Contacts}$ relation $\mathcal{R}$}
 \label{table_data_relation}
 \centering\small
 \begin{tabular}{|c|c|c|c|c|c|c}
 \cline{1-6}
 SIN & Name & CC & ZIP & City & Street & \\
 \cline{1-6} \cline{1-6}
 584 & Claire Green              & 44 & 606     & Chicago   & No.2, Central Rd. & $t_1$\\ \cline{1-6}
 584 & Claire Gree\underline{m}  & 44 & 606     & Chicago   & No.2, Central Rd. & $t_2$\\ \cline{1-6}
 584 & Claire Gree\underline{~~} & 44 & 606     & Chicago   & \#2,  Central Rd. & $t_3$\\ \cline{1-6}
 265 & Jason Smith               & 01 & 021     & Boston    & No.3, Central Rd. & $t_4$\\ \cline{1-6}
 265 & J. Smith                  & 01 & 021     & Boston    & \#3,  Central Rd. & $t_5$\\ \cline{1-6}
 939 & W. J. Smith               & 01 & 021     & Chicago   & \#3,  Central Rd. & $t_6$\\ \cline{1-6}
 \end{tabular}
 \end{table}

Similar to the \fds\ related techniques, \mds\ can be applied in
many tasks as well~\cite{DBLP:conf/pods/Fan08}. For example, in data cleaning, we can
also use \mds\ to detect the inconsistent data, that is, data do not
follow the constraint (rule) specified by \mds.
For example, according to the above $\mathsf{md}$ example, for any two tuples $t_i$ and $t_j$ having similarity greater than $0.8$ on $\mathsf{Street}$, they should be matched on $\mathsf{City}$ as well (similarity $\geq 0.7$). If their $\mathsf{City}$ similarity is less than $0.7$, then there must be something wrong in $t_i$ and $t_j$, i.e., inconsistency. Such inconsistency on text attributes cannot be detected by using \fds\ and extensions based on exact matching.
In addition to locating the inconsistent data, object identification, another important
work for data cleaning,  can also employ \mds\ as matching rules~\cite{DBLP:journals/pvldb/FanLJM09}.
For instance, according to
$$\mathsf{md_2} : ([\mathsf{Name}, \mathsf{Street}] \rightarrow [\mathsf{SIN}],<0.9, 0.9, 1.0>)$$
if two tuples have high similarities on $\mathsf{Name}$ and $\mathsf{Street}$ (both similarities are greater than 0.9), then
these two tuples probably denote the same person in the real world, i.e., having the same $\mathsf{SIN}$.

Though the concept of matching dependencies is given
in~\cite{DBLP:conf/pods/Fan08}, the authors did not discuss how to discover
useful \mds. In fact, given a database instance, there
are enormous \mds\ that can be discovered if we set different similarity thresholds on attributes. Note that if all thresholds are set to $1.0$, \mds\ have the same semantics as traditional \fds, in other words, traditional \fds\ are special cases of \mds. For instance, the above $\mathsf{fd}$ can be represented by a \md\ $([\mathsf{ZIP}]\rightarrow[\mathsf{City}], <1.0, 1.0>)$. Clearly, not all the settings of thresholds for \mds\ are useful.

The utility of \mds\ in the above applications is often evaluated by \emph{confidence} and \emph{support}. Specifically, we consider a \md\ of a relation $\mathcal{R}$, denoted by $\varphi(X \rightarrow Y, \lambda)$, where $X$ and $Y$ are the attribute sets of $\mathcal{R}$, $\lambda$ is a pattern specifying different similarity thresholds on each attribute in $X$ and $Y$. Let $\lambda_X$ and $\lambda_Y$ be the projections of thresholds in pattern $\lambda$ on the attributes $X$ and $Y$ respectively.
The \emph{support} of $\varphi$ is the proportion of tuple pairs whose matching similarities are higher than the thresholds in $\varphi$ on both attributes of $X$ and $Y$. The \emph{confidence} is the ratio of tuple pairs whose matching similarities satisfy $\lambda_X$ also satisfying $\lambda_Y$.
In real applications like inconsistency detection, in order to achieve high detection accuracy, we would like to use \mds\ with high confidence. On the other hand, if users need high recall of detection, then \mds\ with high support are preferred.
Intuitively, we would like to discover those \mds\ with high support, high confidence and high matching quality.
Therefore, in this work, we would like to discover proper settings of matching similarity thresholds for \mds, which can satisfy users' utility requirements of support and confidence.

\paragraph{Contributions}

In this paper, given a relation instance and $X\rightarrow Y$, we study the issues of
discovering matching dependencies on the given $X\rightarrow Y$. Our main contributions are
summarized as follows:


First, we propose the utility evaluation of
\emph{matching dependencies}. Specifically, the
confidence and support evaluations of \mds\ are formally defined.
To the best of our knowledge, this is the first paper to study the utility evaluation and discovery of \mds.

Second, we study the exact algorithms for discovering \mds. The \mds\ discovery problem is to find settings of matching similarity thresholds on
attributes $X$ and $Y$ for \mds\ that can satisfy the required confidence and
support. We first present an exact solution and then study pruning
strategies by the minimum requirements of support and confidence.

Third, we study the approximation algorithms for discovering \mds. Since the exact algorithm has to traverse all the data during the computation, we propose an approximate solution which only use some of the data. A bound of relative errors introduced by the approximation is developed. Moreover, we also develop a strategy of early termination in individual step.

Finally, we report an extensive experimental evaluation. The
proposed algorithms on discovering \mds\ are studied. Our
pruning strategies can significantly improve the efficiency in
discovering \mds.

The remainder of this paper is organized as follows. First, we introduce some related work in Section~\ref{sect_related}. Then, Section~\ref{sect_model} presents the utility measures for \mds, including support and confidence. In Section~\ref{sect_optimal_md}, we develop the exact algorithm for discovering \mds\ and study the corresponding pruning strategies. In Section~\ref{sect_approximation}, we present the approximation algorithm with bounded relative errors. In Section~\ref{sect_experiment}, we report our extensive experimental evaluation. Finally, Section~\ref{sect_conclusion} concludes this paper.
Table~\ref{table_notations} lists the frequently used notations in this paper.

 \begin{table}[t]
 \caption{Notations}
 \label{table_notations}
 \centering
 \begin{tabular}{cl}
 \hline\noalign{\smallskip} Symbol & Description \\
 \hline \noalign{\smallskip}
 $\varphi$ & Matching dependency, \md\ \\ \noalign{\smallskip}
 $\lambda$ & Threshold pattern, of matching similarity \\ \noalign{\smallskip}
 $\mathcal{C}_t$ & Candidate set, of total $c$ threshold patterns\\ \noalign{\smallskip}
 $\eta_s$ & Minimum requirement, of support  \\ \noalign{\smallskip}
 $\eta_c$ & Minimum requirement, of confidence \\ \noalign{\smallskip}
 $\mathcal{R}$ & Original relation, of $N$ data tuples $t$ \\ \noalign{\smallskip}
 $\mathcal{D}$ & Statistical distribution, of $n$ statistical tuples $s$ \\ \noalign{\smallskip}
 \hline
 \end{tabular}
 \end{table}

\section{Related Work}\label{sect_related}

Traditional dependencies, such as functional dependencies
(\fds) and inclusion dependencies (\textsc{ind}s) for the schema
design~\cite{DBLP:books/aw/AbiteboulHV95}, are revisited for new
applications like improving the quality of data. The conditional
functional dependencies (\cfds) are first proposed
in~\cite{DBLP:conf/icde/BohannonFGJK07} for data cleaning. Cong et
al.~\cite{DBLP:conf/vldb/CongFGJM07} study the detecting and
repairing methods of violation by \cfds. Fan et
al.~\cite{DBLP:journals/pvldb/FanMHLW08} investigate the propagation
of \cfds\ for data integration. Bravo et
al.~\cite{DBLP:conf/icde/BravoFGM08} propose an extension of \cfds\
by employing disjunction and negation. Golab et
al.~\cite{DBLP:journals/pvldb/GolabKKSY08} define a range tableau
for \cfds, where each value is a range similar to the concept of
matching similarity intervals in our study. In addition, Bravo et al.~\cite{DBLP:conf/vldb/BravoFM07}  propose conditional inclusion dependency (\textsc{cind}s), which are useful not only in data cleaning, but are also in contextual schema matching.
Ilyas et al.\cite{DBLP:conf/sigmod/IlyasMHBA04} study a novel soft \fd, which is also a generalization of the
classical notion of a hard \fd\ where the value of $X$ completely
determines the value of $Y$. In a soft \fd, the value of $X$ determines the value of $Y$
not with certainty, but merely with high probability.


The confidence and support measures are widely used in discovering approximate functional dependencies~\cite{DBLP:journals/cj/HuhtalaKPT99,DBLP:journals/jamds/KingL03} and evaluating
\cfds~\cite{DBLP:journals/pvldb/GolabKKSY08,DBLP:journals/pvldb/ChiangM08,DBLP:conf/icde/FanGLX09}. The confidence can be interpreted as an estimate of the probability that a randomly drawn pair of tuples agreeing on $X$ also agree on $Y$~\cite{DBLP:journals/tcs/KivinenM95,DBLP:journals/tods/CaldersNW02}.
Scheffer~\cite{DBLP:journals/ida/Scheffer05} study the trade off between support and confidence for finding association rules~\cite{DBLP:conf/sigmod/AgrawalIS93}, by computing a expected prediction accuracy.
In addition, Chiang and Miller~\cite{DBLP:journals/pvldb/ChiangM08}
also study some other measures such as conviction and $\chi^2$-test for evaluating dependency rules.
When a candidate $X\rightarrow Y$ is suggested together with minimum
support and confidence, Golab et
al.~\cite{DBLP:journals/pvldb/GolabKKSY08} study the discovery of
optimal \cfds\ with the minimum pattern tableau size. A concise set
of patterns are naturally desirable which may have lower cost during
the applications such as violation detection by \cfds. On the other hand,
Chiang and Miller~\cite{DBLP:journals/pvldb/ChiangM08} explore
\cfds\ by considering all the possible dependency candidates when
$X\rightarrow Y$ is not specified.
In~\cite{DBLP:conf/icde/FanGLX09}, Fan et al. also study the case
when the embedded \fds\ are not given, and propose three algorithms
for different scenarios.


The concept of matching dependencies (\mds) is first proposed
in~\cite{DBLP:conf/pods/Fan08} for specifying matching rules for the
object identification (see~\cite{DBLP:journals/tkde/ElmagarmidIV07}
for a survey). The \mds\ can be regarded as a generalization of
\fds, which are based on identical values having matching similarity
equal to $1.0$ exactly. Thus, \fds\ can be represented by the syntax of
\mds\ as well. For any two tuples, if their $X$ values are identical
(with similarity threshold $1.0$), then a \fd\ $(X\rightarrow Y)$
requires that their $Y$ values are identical too, i.e., a \md\
$(X\rightarrow Y, <1.0, 1.0>)$.
Koudas et al.~\cite{DBLP:conf/icde/KoudasSSV09} also study the dependencies with matching similarities on attributes $Y$ when given the \emph{exactly} matched values on $X$, which can be treated as a special case of \mds.
The reasoning mechanism for deducing \mds\ from a set of given \mds\ is studied in~\cite{DBLP:journals/pvldb/FanLJM09}. The \mds\ and their reason techniques can improve both the quality and efficiency of various record matching methods.

\section{Utility Measures}\label{sect_model}

In this section, we formally introduce the definitions of \mds. Then, we develop utility measures for
evaluating \mds\ over a given database instance.


Traditional functional dependencies \fds\ and their extensions rely on the exact matching operator $=$ to identify dependency relationships.
However, in the real world application, it is not possible to use exact matching operator
$=$ to identify matching over fuzzy data values such as text values.
For instance, $\mathsf{Jason~Smith}$ and $\mathsf{J. Smith}$ of attribute $\mathsf{Name}$ may refer to the same real world entity.
Therefore, instead of \fds\ on identical values, the \emph{matching dependencies} \mds~\cite{DBLP:conf/pods/Fan08} are proposed based on the matching quality. For text values, we can adopt the similarity matching operators, denoted by $\approx$, such as \emph{edit distance}~\cite{DBLP:journals/csur/Navarro01}, \emph{cosine similarity} with word tokens~\cite{DBLP:conf/sigmod/Cohen98} or \emph{q-grams}~\cite{DBLP:conf/www/GravanoIKS03}.

Consider a relation $\mathcal{R}(A_1,\dots,A_M)$ with $M$ attributes. Following similar syntax of \fds, we define \mds\ as following:
\footnote{The \mds\ syntax is described with two relation schema
$R_1, R_2$ for object identification in~\cite{DBLP:conf/pods/Fan08}, which can also be represented in a single relation schema $R$ as the \fds.}
\begin{definition}
\label{def_md_dependency}
A \emph{matching dependency (\md)} $\varphi$ is a pair $(X\rightarrow Y, \lambda)$, where $X\subseteq\mathcal{R}, Y\subseteq\mathcal{R}$, and $\lambda$ is a \emph{threshold pattern} of matching similarity thresholds on attributes in $X\cup Y$, e.g., $\lambda[A]$ denotes the matching similarity threshold on attribute $A$.
\end{definition}

A \md\ $\varphi$ specifies a constraint on the set of attributes $X$ to $Y$. Specifically, the constraint states that, for any two tuples $t_1$ and $t_2$ in a relation instance $r$ of $\mathcal{R}$, if $\bigwedge_{A_i\in X}t_1[A_i]\approx_{\lambda[A_i]} t_2[A_i]$, then $\bigwedge_{A_j\in Y}t_1[A_j]\approx_{\lambda[A_j]} t_2[A_j]$, where $\lambda[A_i]$ and $\lambda[A_j]$ are the \emph{matching similarity thresholds} on the attributes of $A_i$ and $A_j$ respectively.
In the above constraint, for each attribute $A_i\in X \cup Y$, the similarity matching operator $\approx$ indicates $\mathsf{true}$, if the similarity between $t_1[A_i]$ and $t_2[A_i]$ satisfies the corresponding threshold $\lambda[A_i]$. For example, a \md\ $\varphi([\mathsf{Street}]\rightarrow[\mathsf{City}], <0.8, 0.7>)$ in the
$\mathsf{Contacts}$ relation denotes that if two tuples has similar
$\mathsf{Street}$ (with matching similarity greater than $0.8$)
then their $\mathsf{City}$ values are probably similar as well (with
similarity at least $0.7$).

Like \fds\ and
\cfds~\cite{DBLP:journals/pvldb/GolabKKSY08, DBLP:journals/pvldb/ChiangM08}, we adopt \emph{support} and \emph{confidence} measures to evaluate the matching dependencies.
According to the above constraint of \mds, we need to consider the matching quality (e.g.,
cosine similarity or edit distance) of any pair of tuples $t_1$ and $t_2$ for $\mathcal{R}$.
Therefore, we compute a statistical distribution (denoted by $\mathcal{D}$) of the quality of pair-wised tuple matching for $\mathcal{R}$. The statistical distribution has a schema $\mathcal{D}(A_1,\dots,A_M, P)$, where each attribute $A_i$ in $\mathcal{D}$ corresponds to the matching quality values on the attribute $A_i$ of $\mathcal{R}$, and $P$ is the statistical value.
Let $s$ be a statistical tuple in $\mathcal{D}$. The statistic $s[P]$ denotes the probability that any two tuples $t_1$ and $t_2$ of $\mathcal{R}$ have the matching quality values $s[A_i]$, $\forall A_i\in \mathcal{R}$.
With a pair-wised evaluation of matching quality of all the $N$ tuples for $\mathcal{R}$, we can easily compute $P$ by $\frac{count(s)}{N*(N-1)/2}$, where $count(s)$ records the pairs of tuples having matching quality $s$.
Different matching operators have various spaces of matching values, such as cosine similarity in $[0.0,1.0]$ while edit distance having edit operations $1,2,\dots$.
In order to evaluate in a consistent environment, we map these matching quality values $s[A]$ to a unified space, say $[0,d-1]$, which is represented by $\mathsf{dom}(A)$ with $d$ elements.
Table~\ref{table_model_general} shows an example of the statistical
distribution $\mathcal{D}$ computed from $\mathsf{Contacts}$ in Table \ref{table_data_relation} by mapping\footnote{E.g., cosine similarity
value $s$ times $d-1$} the cosine similarities in $[0.0,1.0]$ to elements in $[0,d-1]$ of $\mathsf{dom}(A)$ with $d=10$. According to $\mathsf{dom}(A)$ in
our example, the first tuple $(1, 0, 3,\dots, 0.065)$ denotes that
there are about $6.5\%$ matching pairs in all pair-wised tuple matching, whose similarities are $1, 0, 3,\dots$ on the
attribute $A_1, A_2, A_3, \dots$ respectively.

 \begin{table}[h]
 \caption{Example of statistical distribution $\mathcal{D}$}
 \label{table_model_general}
 \centering
 \begin{tabular}{|cccccc|c|c}
 \cline{1-7}
 $A_1$ & $A_2$ & $A_3$ & $A_4$ & $A_5$ & $A_6$   & $P$ & \\ \cline{1-7}
 1 & 0 & 3 & 5 & 8                   & 4   & 0.065 & $s_1$ \\
 7 & 4 & 0 & 0 & 4                   & 1   & 0.043 & $s_2$ \\
 0 & 4 & 8 & 1 & 6                   & 2   & 0.124 & $s_3$ \\
 \vdots & \vdots   & \vdots & \vdots & \vdots & \vdots & \vdots & \vdots \\ \cline{1-7}
 \end{tabular}
 \end{table}

Then, we can measure the support and confidence of \mds, with various attributes $X$ and $Y$, based on the statistical distribution $\mathcal{D}$. Let $\lambda_X$ and $\lambda_Y$ be the projections of matching similarity threshold pattern $\lambda$ on the attributes of $X$ and
$Y$ respectively in a \md\ $\varphi$, which are also specified in terms
of elements in $\mathsf{dom}(A)$ of each $A\in X\cup Y$. Let $Z$ be the set of attributes not specified by $\varphi$, i.e., $\mathcal{R}\setminus (X\cup Y)$. The definitions of support and confidence for the
\md\ $\varphi(X\rightarrow Y, \lambda)$ are presented as follows:
\begin{eqnarray*}
\mathsf{support}(\varphi)&=&P(X\vDash\lambda_X, Y\vDash\lambda_Y)\\
&=&\sum_{Z}P(X\vDash\lambda_X, Y\vDash\lambda_Y, Z)\\
\label{equ_support_general}
\mathsf{confidence}(\varphi)&=&P(Y\vDash\lambda_Y \mid X\vDash\lambda_X)\\
&=&\frac{\sum_{Z}P(X\vDash\lambda_X, Y\vDash\lambda_Y,
Z)}{\sum_{Y, Z}P(X\vDash\lambda_X, Y, Z)}
\label{equ_confidence_general}
\end{eqnarray*}
where $\vDash$ denotes the \emph{satisfiability} relationship, i.e.,
$X\vDash\lambda_X$ denotes that the similarity values on all attributes in $X$ satisfy the corresponding thresholds listed in $\lambda_X$.
For example, we say that a statistical tuple $s$ in $\mathcal{D}$ satisfies $\lambda_X$, i.e.,
$s[X]\vDash\lambda_X$, if $s$ has similarity values higher than the corresponding minimum threshold, i.e., $s[A]\geq \lambda[A]$, for each attribute $A$ in $X$.

Consider any two tuples $t_1$ and $t_2$ from the original data
relation $\mathcal{R}$, the $\mathsf{support}(\varphi)$ estimates the
probability that the matching similarities of $t_1$ and $t_2$ on
attributes $X$ and $Y$ satisfy the thresholds specified by $\lambda_X$ and $\lambda_Y$, respectively. Similarly, the $\mathsf{confidence}(\varphi)$ computes the conditional probability that
the matching similarities between $t_1$ and $t_2$ on $Y$ satisfy the thresholds specified by
$\lambda_Y$ (i.e., $Y\vDash\lambda_Y$) given the condition that $t_1$ and $t_2$ are similar on attributes $X$ (i.e., $X\vDash\lambda_X$). Thus, high $\mathsf{confidence}(\varphi)$
means few instances of matching pairs that are similar on attributes
$X$ (i.e., $X\vDash\lambda_X$) but not similar on attributes $Y$
(i.e., $Y \nvDash\lambda_Y$), where $\nvDash$ denotes the
unsatisfiability relationship.

In real applications like inconsistency detection, in order to
achieve high detection accuracy, we would like to use \mds\
with high confidence. On the other hand, if users need high
recall of detection, then \mds\ with high support are preferred.
Intuitively, we
would like to discover those \mds\ with high support and high
confidence. Therefore, in the following of this paper, we study the problem of discovering \mds\ that can satisfy users minimum utility requirement of support $\eta_s$ and confidence $\eta_s$.


\section{Exact Algorithm}\label{sect_optimal_md}

We now study the determination of matching similarity threshold pattern for \mds\ based on the statistical distribution, which is a new problem different from \fds. In fact, once the $X\rightarrow Y$ is given for a \fd, it already implies the similarity threshold to be
$1.0$, that is, $(X\rightarrow Y, <1.0, 1.0>)$ if it is represented by
the \md\ syntax. Unlike \fds, we have various settings of matching
similarity thresholds for \mds. Therefore, in this section, we discuss how to find the right similarity thresholds in order to discover the \mds\ satisfying the required support and confidence.


\subsection{Problem Statement}\label{sect_optimal_discrete_confidence}

In order to discover a \md\ $\varphi$ with the minimum requirements of support $\eta_{s}$ and confidence $\eta_{c}$, the following preliminary should be given first:
\textbf{(I)} what is $Y$? and \textbf{(II)} what is matching quality requirement $\lambda_Y$. These two preliminary questions are usually addressed by specific applications. For example, if we would like to use discovered \mds\ to guide objet identification in the $\mathsf{Contacts}$ table, then $Y=\mathsf{SIN}$. The $\lambda_Y$ is often set to high similarity thresholds by applications to ensure high matching quality on $Y$ attributes. For example, $\lambda_Y$ is set to $1.0$ for $Y=\mathsf{SIN}$ in the object identification application.
Note that without the preliminary $\lambda_Y$, the discovered \mds\
will be meaningless. For example, a \md\ with $\lambda_Y=0$ can
always satisfy any requirement of $\eta_c, \eta_s$. Since all the
statistical tuples can satisfy the thresholds $\lambda_Y=0$, the
corresponding support and confidence will always be equal to $1.0$.

\begin{definition}
The threshold determination problem of \mds\ is: given the
minimum requirements of support and confidence $\eta_s, \eta_c$ and
the matching similarity threshold pattern $\lambda_Y$, find all the \mds\
$\varphi(X \rightarrow Y, \lambda)$ with threshold pattern $\lambda_X$ on attributes $X$ having
$\mathsf{confidence}(\varphi)\geq\eta_c$ and
$\mathsf{support}(\varphi)\geq\eta_s$, if exist; otherwise return
\emph{infeasible}.
\end{definition}

The attributes $X$ can be initially assigned by $\mathcal{R} \setminus Y$ if no suggestion is provided by specific applications, since our discovery process can automatically remove those attributes that are not required in $X$ for a \md\ $\varphi$.
Specifically, when a possible discovered threshold $\lambda[A]$ on
attribute $A$ is $0\in\mathsf{dom}(A)$, it means that any matching similarity value of the attribute $A \in X$ can satisfy the threshold $0$ and will not affect the \md\ $\varphi$ at all. In other words, the attribute $A$ can be removed from $X$ of the
\md\ $\varphi$.

\subsection{Exact Algorithm}\label{sect_optimal_discrete_exact}

Now, we present an algorithm to compute the matching similarity
thresholds on attributes $X$ for \mds\ having support and confidence greater than $\eta_s$ and $\eta_c$, respectively.
%
Let $A_1,\dots,A_{m_X}$ be the $m_X$ attributes in $X$. For simplicity, we use $\lambda$ to denote the threshold pattern projection $\lambda_X$
with $\lambda[A_1], \dots, \lambda[A_{m_X}]$ on all the $m_X$
attributes of $X$. Since, each threshold $\lambda[A]$ on attribute
$A$ is a value from $\mathsf{dom}(A)$, i.e.,
$\lambda[A]\in\mathsf{dom}(A)$, we can investigate all the
possible candidates of threshold pattern $\lambda$. Let $\mathcal{C}_t$ be the
set of all the possible threshold pattern candidates, having
$$\mathcal{C}_t=\mathsf{dom}(A_1)\times\dots\times\mathsf{dom}(A_{m_X})=\mathsf{dom}(X).$$
The total number of candidates is
$c=|\mathcal{C}_t|=|\mathsf{dom}(X)|=d^m$, where $d$ is the size of
$\mathsf{dom}(A)$.

Let $n$ be the number of statistical tuples in the input
statistical distribution $\mathcal{D}$. We consider two
statistical values $P_i^{j}(X, Y)$ and $P_i^{j}(X)$, which record
$P(X\vDash\lambda_X, Y\vDash\lambda_Y)$ and
$P(X\vDash\lambda_X)$ respectively for the candidate
$\lambda_j\in\mathcal{C}_t$ based on the information of the first $i$
tuples in $\mathcal{D}$, initially having $P_0^{j}(X, Y)=P_0^{j}(X)=0$.
The recursion is defined as follows, with $i$ increasing from $1$ to
$n$ and $j$ increasing from $1$ to $c$.
\begin{eqnarray*}
P_i^{j}(X, Y) &=&  \begin{cases}
   P_{i-1}^{j}(X, Y)+s_i[P], & \mathrm{if~} s_i[X]\vDash\lambda_j, s_i[Y]\vDash\lambda_Y \\
   P_{i-1}^{j}(X, Y), & \mathrm{otherwise}
   \end{cases}\\
P_{i}^{j}(X) &=&  \begin{cases}
   P_{i-1}^{j}(X)+s_i[P], & \mathrm{if~} s_i[X]\vDash\lambda_j  \\
   P_{i-1}^{j}(X), & \mathrm{otherwise}
   \end{cases}
\end{eqnarray*}
Finally, those $\lambda_j$ can be returned if
$\mathsf{support}=P_n^{j}\geq \eta_s$ and
$\mathsf{confidence}=\frac{P_n^{j}(X, Y)}{P_n^{j}(X)}\geq \eta_c$.

 \begin{algorithm}[h]
 \caption{Exact algorithm \textbf{EA}($\mathcal{D}, \mathcal{C}_t$)} %
 \label{alg_md_exact} %
 \begin{algorithmic}[1]
    \FOR{each candidate $\lambda_j\in \mathcal{C}_t, j:1\rightarrow c$}
     \STATE $P_0^{j}(X, Y)=P_0^{j}(X)=0$
     \FOR{each statistical tuples $s_i\in \mathcal{D}, i:1\rightarrow n$}
      \STATE compute $P_i^{j}(X, Y),P_{i}^{j}(X)$
     \ENDFOR
    \ENDFOR
    \RETURN $\lambda_j$ with confidence and support satisfying $\eta_c,\eta_s$
 \end{algorithmic}
 \end{algorithm}

We can implement the exact algorithm (namely \ea) by considering all
the statistical tuples $s_i$ in $\mathcal{D}$ with $i$ from $1$ to
$n$, whose time complexity is $\mathcal{O}(nc)$.

\subsection{Pruning Strategies}\label{sect_optimal_discrete_pruning}

Since the original exact algorithm needs to traverse all the $n$ statistical
tuples in $\mathcal{D}$ and $c$ candidate threshold patterns in
$\mathcal{C}_t$, which is very costly. In fact, with the given $\eta_s$ and $\eta_c$, we can investigate the relationship between similarity thresholds and avoid checking all candidate threshold patterns in $\mathcal{C}_t$ and all statistical tuples in $\mathcal{D}$. Therefore, in the following two subsections, we present pruning techniques based on the given support and confidence, respectively.

\paragraph{Pruning by support}
We first study the relationships among different threshold patterns, based on which we then propose rules to filter out candidates that have supports lower than $\eta_s$.

\begin{definition}\label{def_md_dominate}
Given two similarity threshold patterns $\lambda_1$ and $\lambda_2$, if $\lambda_1[A]\leq
\lambda_2[A]$ holds for all the attributes, $\forall A \in X$, then $\lambda_1$ \emph{dominates} $\lambda_2$,
denoted as $\lambda_1\lessdot \lambda_2$.
\end{definition}


Based on the \emph{dominate} definition, the following Lemma describes the relationships of supports between similarity threshold patterns.

\begin{lemma}\label{lemma_md_support_increase}
Given two \mds, $\varphi_1=(X \rightarrow Y, \lambda_1)$ and $\varphi_2=(X \rightarrow Y, \lambda_2)$ over the same relation instance of $\mathcal{R}$, if $\lambda_1$ dominates $\lambda_2$, $\lambda_1\lessdot \lambda_2$, then we have
$\mathsf{support}(\varphi_1)\geq \mathsf{support}(\varphi_2)$.
\end{lemma}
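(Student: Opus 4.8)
The plan is to express $\mathsf{support}$ directly as a sum over the statistical tuples of $\mathcal{D}$ and then reduce the claim to an elementary set-containment argument. Writing $\lambda_{1,X}$ and $\lambda_{2,X}$ for the projections on $X$, and noting that in the threshold-determination problem both \mds\ carry the same given $\lambda_Y$ on $Y$ (only the thresholds on $X$ vary, which is exactly what the \emph{dominate} relation in Definition~\ref{def_md_dominate} constrains), I would set, for $k=1,2$,
$$S_k = \{\, s \in \mathcal{D} : s[X] \vDash \lambda_{k,X} \text{ and } s[Y]\vDash\lambda_Y \,\}.$$
By the definition of $\mathsf{support}$ and the meaning of $\vDash$ (namely $s[A]\geq\lambda[A]$ for every relevant attribute $A$), together with the observation that the outer sum over the unspecified attributes $Z$ in the support formula just collects the statistic $s[P]$ over all matching statistical tuples, we have $\mathsf{support}(\varphi_k)=\sum_{s\in S_k}s[P]$.

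The key step is then the containment $S_2 \subseteq S_1$. Take any $s \in S_2$. For each $A \in X$ we have $s[A] \geq \lambda_2[A]$ by definition of $S_2$, and since $\lambda_1 \lessdot \lambda_2$ gives $\lambda_1[A] \leq \lambda_2[A]$, it follows that $s[A] \geq \lambda_1[A]$; hence $s[X] \vDash \lambda_{1,X}$. The condition $s[Y]\vDash\lambda_Y$ is identical in the two sets, so $s \in S_1$. This is just the intuition that raising a similarity threshold can only shrink the set of tuple pairs that meet it.

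Finally, since every statistical value is a probability and hence $s[P] = count(s)/(N(N-1)/2) \geq 0$, monotonicity of a sum of non-negative terms over a larger index set gives
$$\mathsf{support}(\varphi_1) = \sum_{s\in S_1} s[P] \;\geq\; \sum_{s\in S_2} s[P] = \mathsf{support}(\varphi_2),$$
which is the claim. I do not anticipate a genuine obstacle: the only points needing a little care are to make explicit that $\varphi_1$ and $\varphi_2$ share the same $\lambda_Y$ (otherwise the statement is false), and to invoke the non-negativity of $s[P]$ so that enlarging the index set cannot decrease the sum. Everything else is the direct unfolding of the definitions of $\mathsf{support}$ and $\vDash$.
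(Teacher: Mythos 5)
Your proof is correct and follows essentially the same route as the paper's: both reduce the claim to the containment of the set of satisfying statistical tuples (the paper's $\mathsf{cover}(\lambda_2)\subseteq\mathsf{cover}(\lambda_1)$ is your $S_2\subseteq S_1$) and then appeal to monotonicity of the support sum. Your version is slightly more explicit in carrying the shared $\lambda_Y$ condition and the non-negativity of $s[P]$, but these are refinements of the same argument rather than a different approach.
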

\begin{proof}
Let $\mathsf{cover}(\lambda_1)$ and $\mathsf{cover}(\lambda_2)$
denote the set of statistical tuples that satisfy the threshold
$\lambda_1$ and $\lambda_2$ respectively, e.g.,
$\mathsf{cover}(\lambda_2)=\{s\mid s[X]\vDash\lambda_2, s\in
\mathcal{D}\}$. According to the minimum similarity thresholds, for
each attribute $A$, we have $\lambda_2[A]\leq s[A]$. In
addition, since $\lambda_1\lessdot\lambda_2$, for any tuple $s\in
\mathsf{cover}(\lambda_2)$, we also have
$\lambda_1[A]\leq\lambda_2[A]\leq s[A]$ on all the attributes
$A$. In other words, the set of statistical tuples covered by
$\lambda_2$ also satisfy the threshold of $\lambda_1$, i.e.,
$\mathsf{cover}(\lambda_2)\subseteq\mathsf{cover}(\lambda_1)$.
Referring to the definition of $\mathsf{support}$, we have
$\mathsf{support}(\varphi_1)\geq \mathsf{support}(\varphi_2)$.
\end{proof}


According to Lemma \ref{lemma_md_support_increase}, given a candidate similarity threshold pattern $\lambda_j$ having support lower than the user specified requirement $\eta_s$, i.e., $P_n^{j}(X, Y)<\eta_s$,  all the candidates that are dominated by $\lambda_j$ should have support lower than $\eta_s$ and can be safely pruned without computing their associated support and confidence.


We present the implementation of pruning by support (namely \eps) in
Algorithm~\ref{alg_md_prune_support}.

\begin{algorithm}[h]
\caption{Pruning by support \textbf{EPS}($\mathcal{D},\mathcal{C}_t$)} %
\label{alg_md_prune_support} %
\begin{algorithmic}[1]
   \FOR{each candidate $\lambda_j\in \mathcal{C}_t, j:1\rightarrow c$}
    \STATE $Q^a_0{j}=Q^b_0{j}=0$
    \FOR{each tuple $s_i\in \mathcal{D}, i:1\rightarrow n$}
     \STATE compute $Q^a_i{j},P_{i}^{j}(X)$
    \ENDFOR
    \IF{$Q^a_n{j}<\eta_s$}
     \STATE remove all the remaining candidates $\lambda'$ dominated by $\lambda_j$ from
     $\mathcal{C}_t$ \hfill \COMMENT{Pruning by support, $\lambda'\gtrdot \lambda_j$}
    \ENDIF
   \ENDFOR
   \RETURN $\lambda_j$ with confidence and support satisfying $\eta_c,\eta_s$
\end{algorithmic}
\end{algorithm}

In order to maximize the pruning, we can heuristically select an
ordering of candidates in $\mathcal{C}_t$ that for any $j_1 < j_2$
having $\lambda_{j_1}\lessdot \lambda_{j_2}$. That is, we always
first process the candidates that dominate others. In fact, we can use a DAG (directed acyclic graph), $\mathcal{G}$, to represent candidate similarity patterns as vertices and dominant relationships among the similarity patterns as edges. Therefore, the dominant order of candidate patterns can be obtained by a \textsc{bfs} traversal upon $\mathcal{G}$.

\paragraph{Pruning by confidence}
Other than pruning by support, we can also utilize the given confidence requirement to avoid further examining tuples that have no improvement of
confidence when the confidence is already lower than $\eta_c$ for a
candidate $\lambda_j$.

We first group the statistical tuples in $\mathcal{D}$
into two parts based on the preliminary $\lambda_Y$ as follows. Let
$k$ be a pivot between $1$ and $n$. For the first $k$ tuples, we
have $s_i[Y]\vDash\lambda_Y, 1\leq i\leq k$. All the remaining
$n-k$ tuples have $s_i[Y]\nvDash\lambda_Y, k+1 \leq i\leq n$.
This grouping of statistical tuples in $\mathcal{D}$ can be done in linear time.

\begin{lemma}\label{lemma_md_confidence_decrease}
Consider a pre-grouped statistical distribution $\mathcal{D}$. For
any $1\leq i_1<i_2\leq n$, we always have
$$\frac{P_{i_1}^{j}(X, Y)}{P_{i_1}^{j}(X)}\geq\frac{P_{i_2}^{j}(X, Y)}{P_{i_2}^{j}(X)}.$$
\end{lemma}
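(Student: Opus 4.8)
The plan is to reduce the statement to the consecutive case $i_2=i_1+1$ and then chain the resulting inequalities. To avoid any trouble when a denominator vanishes, I would first establish the cross‑multiplied form: for every $i$ with $1\le i\le n$,
$$P_{i-1}^{j}(X,Y)\,P_{i}^{j}(X)\ \ge\ P_{i}^{j}(X,Y)\,P_{i-1}^{j}(X).$$
Writing $P_i^{j}(X)=P_{i-1}^{j}(X)+b$ and $P_i^{j}(X,Y)=P_{i-1}^{j}(X,Y)+a$, where $a,b\in\{0,s_i[P]\}$ are the increments dictated by the two recursions (to $P^{j}(X,Y)$ and to $P^{j}(X)$ respectively), the difference between the two sides of the displayed inequality collapses to $P_{i-1}^{j}(X,Y)\cdot b-a\cdot P_{i-1}^{j}(X)$, so it suffices to prove $P_{i-1}^{j}(X,Y)\cdot b\ \ge\ a\cdot P_{i-1}^{j}(X)$.

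Next I would split on the behaviour of tuple $s_i$. If $s_i[X]\nvDash\lambda_j$ then $a=b=0$ and the inequality is $0\ge 0$. If $s_i[X]\vDash\lambda_j$ but $i>k$, then by the pre‑grouping $s_i[Y]\nvDash\lambda_Y$, hence $a=0$ while $b=s_i[P]\ge 0$, and the inequality again holds trivially. The only substantive case is $s_i[X]\vDash\lambda_j$ with $i\le k$, where $a=b=s_i[P]$ and the inequality becomes $s_i[P]\bigl(P_{i-1}^{j}(X,Y)-P_{i-1}^{j}(X)\bigr)\ge 0$. Here I would invoke the key structural fact about a pre‑grouped distribution: for every index $\ell\le k$ one has $P_\ell^{j}(X,Y)=P_\ell^{j}(X)$. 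This is a short induction on $\ell$ — the base case is $P_0^{j}(X,Y)=P_0^{j}(X)=0$, and in the inductive step $s_{\ell+1}[Y]\vDash\lambda_Y$ (since $\ell+1\le k$), so the two recursions increment in lockstep by the same quantity $s_{\ell+1}[P]\cdot[\,s_{\ell+1}[X]\vDash\lambda_j\,]$. Applied with $\ell=i-1\le k-1$, this gives $P_{i-1}^{j}(X,Y)-P_{i-1}^{j}(X)=0$, closing the case.

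Finally I would assemble the consecutive inequalities. Each of $P_i^{j}(X)$ and $P_i^{j}(X,Y)$ is non‑decreasing in $i$, because every increment $s_i[P]$ is a probability and hence nonnegative; thus whenever the denominators are positive the consecutive inequality reads $\frac{P_{i-1}^{j}(X,Y)}{P_{i-1}^{j}(X)}\ge\frac{P_{i}^{j}(X,Y)}{P_{i}^{j}(X)}$, and chaining these from $i_1+1$ up to $i_2$ yields the claim. The boundary situations in which some $P_\ell^{j}(X)=0$ are handled directly from the product form, since then $P_\ell^{j}(X,Y)=0$ as well and both sides of the cross‑multiplied inequality vanish. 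I expect the only real subtlety to be isolating and proving the lockstep identity $P_\ell^{j}(X,Y)=P_\ell^{j}(X)$ for $\ell\le k$ — this is exactly where the pre‑grouping hypothesis is used — together with noting the zero‑denominator cases so that the quotient appearing in the statement is well posed; the rest is routine bookkeeping.
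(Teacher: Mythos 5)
Your proof is correct and follows essentially the same route as the paper's: the pre-grouping forces $P_i^{j}(X,Y)$ and $P_i^{j}(X)$ to increment in lockstep for $i\le k$ (so the ratio is constant there) and freezes the numerator while the denominator grows for $i>k$. Your version is somewhat more careful than the paper's — the cross-multiplied consecutive-step formulation and the explicit induction for the lockstep identity handle the zero-denominator cases that the paper silently ignores — but the underlying argument is identical.
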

\begin{proof}
Since the first $k$ tuples have $s_i[Y]\vDash\lambda_Y$, according
to the computation of $P(X, Y)$ and $P(X)$, we have
$$\frac{P_{i}^{j}(X, Y)}{P_{i}^{j}(X)}=1.0, \quad 1\leq i\leq k.$$
Moreover, for
the remaining $n-k$ tuples with $s_i[Y]\nvDash\lambda_Y$, the
$P(X, Y)$ value will not change any more, i.e., $P_{i}^{j}(X, Y)=P_{k}^{j}(X,Y),
k+1 \leq i \leq n$. Meanwhile, the corresponding $P(X)$ is
non-decreasing, that is, $P_{k}^{j}(X)\leq P_{i_1}^{j}(X) \leq
P_{i_2}^{j}(X)$ for any $k+1\leq i_1<i_2 \leq n$. Consequently, we
have
$$\frac{P_{i_1}^{j}(X, Y)}{P_{i_1}^{j}(X)}\geq\frac{P_{i_2}^{j}(X, Y)}{P_{i_2}^{j}(X)},\quad k+1\leq i_1<i_2 \leq n.$$
Combining above two statements, we proved the lemma.
\end{proof}

Therefore, according to the formula of confidence, with the increase
of $i$ from $1$ to $n$, the confidence of a specific candidate
$\lambda_j$ is non-increasing. For a candidate
$\lambda_j$, when processing the statistical tuple $s_i$, if the
current confidence $\frac{P_i^{j}(X, Y)}{P_{i}^{j}(X)}$ is lower than
$\eta_c$, then we can prune the candidate $\lambda_j$ without
considering the remaining statistical tuples from $i+1$ to $n$ in
$\mathcal{D}$.

\begin{algorithm}[h]
\caption{Pruning by support \& confidence \textbf{EPSC}($\mathcal{D},\mathcal{C}_t$)} %
\label{alg_md_prune_both} %
\begin{algorithmic}[1]
   \FOR{each candidate $\lambda_j\in \mathcal{C}_t, j:1\rightarrow c$}
    \STATE $P_0^{j}(X, Y)=P_0^{j}(X)=0$
    \FOR{each tuple $s_i\in \mathcal{D}, i:1\rightarrow n$}
     \STATE compute $P_i^{j}(X,Y),P_{i}^{j}(X)$
     \IF{$\frac{P_{i}^{j}(X, Y)}{P_{i}^{j}(X)}<\eta_c$}
      \STATE remove $\lambda_j$ from $\mathcal{C}_t$ \hfill \COMMENT{Pruning by confidence}
      \IF{$P_{i}^{j}(X, Y)\geq\eta_s$}
       \STATE \textbf{break}
      \ENDIF
     \ENDIF
    \ENDFOR
    \IF{$P_n^{j}(X, Y) <\eta_s$}
     \STATE remove all the remaining candidates $\lambda'$ dominated by $\lambda_j$ from
     $\mathcal{C}_t$ \hfill \COMMENT{Pruning by support, $\lambda'\gtrdot \lambda_j$}
    \ENDIF
   \ENDFOR
   \RETURN $\lambda_j$ with confidence and support satisfying $\eta_c,\eta_s$
\end{algorithmic}
\end{algorithm}

Finally, both the pruning by support and the pruning by confidence are cooperated
together into a single threshold determination algorithm as shown in Algorithm~\ref{alg_md_prune_both}(namely \epsc). We also demonstrate the performance of the hybrid pruning \epsc\ in Section~\ref{sect_experiment}.

\section{Approximation Algorithm}\label{sect_approximation}

Though we have proposed pruning rules for exact method (Algorithm \ref{alg_md_prune_both}), the whole evaluation space is still all the $n$ tuples in statistical distribution $\mathcal{D}$. Therefore, in this section, we present an approximate algorithm which only traverses the first $k$ ($k=1, \ldots n$) tuples in $\mathcal{D}$, with bounded relative errors on support and confidence of returned \mds.

Let $C^n$ and $S^n$ be the confidence and support computed in the exact
solution with all $n$ tuples. We study the approximate confidence and support, $C^k$ and $S^k$,
by ignoring the statistical tuples from $s_{k+1}$ to $s_n$. For a candidate threshold pattern $\lambda_j\in\mathcal{C}_t$, let
$$ \beta = P_k^{j}(X), \quad \bar{\beta}  = P_n^{j}(X) - P_k^{j}(X),$$
where $\beta$ denotes $P(X\vDash\lambda_X)$ for the candidate
$\lambda_j$ based on the first $k$ tuples in $\mathcal{D}$, and
$\bar{\beta}$ is $P(X\vDash\lambda_X)$ based on the remaining
$n-k$ tuples.
The following Lemma indicates the error bounds of $C^k$ and $S^k$ when $\bar{\beta}$ for a specific $k$ is in a certain range.

\begin{lemma}\label{lemma_md_appx_bound}
If we have $\bar{\beta}\leq \min (\epsilon\eta_s,
\frac{\epsilon\eta_s\eta_c}{1-\epsilon-\eta_c})$, then the error of
approximate confidence $C^k$ compared to the exact confidence
$C^n$ is bounded by $-\epsilon\leq\frac{C^n-C^k}{C^n}\leq\epsilon$,
and the error of approximate support $S^k$ compared to the exact $S^n$ is bounded by $\frac{S^n-S^k}{S^n}\leq\epsilon$.
\end{lemma}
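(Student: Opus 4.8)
The plan is to rewrite the truncated quantities $S^{k},C^{k}$ in terms of the full quantities $S^{n},C^{n}$ and the mass discarded by the cut-off, and then bound the two relative errors separately. Fix a candidate $\lambda_{j}$ and abbreviate $S^{n}=P_{n}^{j}(X,Y)$, $S^{k}=P_{k}^{j}(X,Y)$, $D=P_{n}^{j}(X)=\beta+\bar{\beta}$, and $\delta=S^{n}-S^{k}$. Since every weight $s_{i}[P]$ is nonnegative and truncating at $k$ only deletes the tuples $s_{k+1},\dots,s_{n}$, all partial sums in the recursion of Section~\ref{sect_optimal_discrete_exact} are monotone in $i$; comparing the two recursions then yields the three structural facts I will lean on: $0\le\delta\le\bar{\beta}$ (a deleted tuple contributes to $P^{j}(X,Y)$ only when it contributes to $P^{j}(X)$), $S^{k}\le S^{n}$, and $\beta=P_{k}^{j}(X)\ge P_{k}^{j}(X,Y)=S^{k}$. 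I will also use that for the candidates the discovery procedure would report we have $S^{n}\ge\eta_{s}$ and $C^{n}=S^{n}/D\ge\eta_{c}$.

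For the support I argue directly: $S^{k}=S^{n}-\delta$ with $0\le\delta\le\bar{\beta}\le\epsilon\eta_{s}$ (the first term of the minimum in the hypothesis), hence
\[
\frac{S^{n}-S^{k}}{S^{n}}=\frac{\delta}{S^{n}}\le\frac{\bar{\beta}}{\eta_{s}}\le\epsilon ,
\]
and since $S^{k}\le S^{n}$ nothing is needed on the other side. For the confidence I start from the clean ratio identity
\[
\frac{C^{k}}{C^{n}}=\frac{S^{k}/\beta}{S^{n}/D}=\Bigl(1-\frac{\delta}{S^{n}}\Bigr)\Bigl(1+\frac{\bar{\beta}}{\beta}\Bigr).
\]
The inequality $C^{k}/C^{n}\ge 1-\epsilon$, i.e.\ $\frac{C^{n}-C^{k}}{C^{n}}\le\epsilon$, follows immediately by dropping the factor $1+\bar{\beta}/\beta\ge 1$ and reusing $\delta/S^{n}\le\bar{\beta}/\eta_{s}\le\epsilon$. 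The reverse inequality $C^{k}/C^{n}\le 1+\epsilon$, i.e.\ $\frac{C^{n}-C^{k}}{C^{n}}\ge-\epsilon$, is the delicate one: after using the structural facts it reduces to controlling $\bar{\beta}/\beta$, that is, to a lower bound on $\beta=P_{k}^{j}(X)=D-\bar{\beta}=S^{n}/C^{n}-\bar{\beta}$. Substituting this expression for $\beta$, clearing denominators, and then invoking $S^{n}\ge\eta_{s}$ together with $C^{n}\ge\eta_{c}$ is what turns the requirement into the threshold $\bar{\beta}\le\frac{\epsilon\eta_{s}\eta_{c}}{1-\epsilon-\eta_{c}}$, the second term of the minimum; combining the three displayed estimates then gives both error bounds, completing the proof.

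The step I expect to be the main obstacle is this last one — obtaining a usable lower bound on $\beta=P_{k}^{j}(X)$. In contrast to $S^{n}$, which is a full sum bounded below by $\eta_{s}$, the quantity $\beta$ is only a partial sum and could a priori be a small fraction of $D$ if a large part of the $X$-satisfying mass lives in the discarded tail $s_{k+1},\dots,s_{n}$; the content of the lemma is that the hypothesis on $\bar{\beta}$ is exactly strong enough to forbid this, and the coupling with the confidence level through $D=S^{n}/C^{n}$ and $C^{n}\ge\eta_{c}$ is precisely why the bound on $\bar{\beta}$ must carry the extra $\eta_{c}$-dependent term in addition to $\epsilon\eta_{s}$. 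Everything else — the support error and the one-sided confidence error — is a routine consequence of the monotonicity relations $\delta\le\bar{\beta}$ and $\beta\ge S^{k}$.
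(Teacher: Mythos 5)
Your structural setup is sound --- the identity $\frac{C^k}{C^n}=\bigl(1-\frac{\delta}{S^n}\bigr)\bigl(1+\frac{\bar{\beta}}{\beta}\bigr)$ and the monotonicity facts $0\le\delta\le\bar{\beta}$, $\beta\ge S^k$ are exactly right, and your support bound and the direction $\frac{C^n-C^k}{C^n}\le\epsilon$ go through under your reading of the side conditions. But you have quietly changed those side conditions: the paper's proof assumes the \emph{approximate} quantities pass the thresholds, $S^k=P_k^{j}(X,Y)\ge\eta_s$ and $C^k=P_k^{j}(X,Y)/P_k^{j}(X)\ge\eta_c$ (i.e.\ $\lambda_j$ is a candidate the approximation algorithm would actually return), whereas you assume $S^n\ge\eta_s$ and $C^n\ge\eta_c$. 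This swap is why your ``easy'' and ``delicate'' directions are the reverse of the paper's: there, the delicate direction is $C^k/C^n\ge1-\epsilon$, which is precisely where $1/\eta_c$ enters (via $\beta/\alpha\le1/C^k\le1/\eta_c$, with $\alpha=S^k$) and where the second term of the minimum is consumed, through $Z\ge\frac{\beta+\bar{\beta}}{\beta+\bar{\beta}/\eta_c}=1-\frac{\bar{\beta}(1-\eta_c)}{\beta\eta_c+\bar{\beta}}\ge1-\epsilon$.

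More seriously, the step you yourself flag as the main obstacle fails as described. To get $C^k/C^n\le1+\epsilon$ you propose to lower-bound $\beta=S^n/C^n-\bar{\beta}$ by ``invoking $S^n\ge\eta_s$ together with $C^n\ge\eta_c$.'' But $C^n\ge\eta_c$ gives an \emph{upper} bound on $D=S^n/C^n$, not a lower one; the only lower bound available from this route is $D\ge S^n\ge\eta_s$ (from $C^n\le1$), which yields $\frac{\bar{\beta}}{\beta}\le\frac{\bar{\beta}}{\eta_s-\bar{\beta}}\le\frac{\epsilon}{1-\epsilon}>\epsilon$, and no clearing of denominators produces the threshold $\frac{\epsilon\eta_s\eta_c}{1-\epsilon-\eta_c}$ from this manipulation. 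The direction can be rescued under your assumptions using the structural fact you already listed, $\beta\ge S^k$: since $1+\bar{\beta}/\beta$ decreases in $\beta$, $\frac{C^k}{C^n}=\frac{S^k}{S^n}\cdot\frac{\beta+\bar{\beta}}{\beta}\le\frac{S^k+\bar{\beta}}{S^n}\le1+\frac{\bar{\beta}}{\eta_s}\le1+\epsilon$. Notice, however, that the repaired proof then uses only $\bar{\beta}\le\epsilon\eta_s$ and never touches the $\eta_c$-dependent term --- a clear sign that you are proving a different lemma from the paper's, in which that term is genuinely needed (for $Z\ge1-\epsilon$ under $\beta\ge\alpha\ge\eta_s$ and $\alpha/\beta\ge\eta_c$), and which is the version that actually certifies the candidates returned by Algorithm AP.
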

\begin{proof}
Let \begin{eqnarray*}
\alpha &=& P_k^{j}(X, Y) \\
\bar{\alpha}  &=& P_n^{j}(X, Y) - P_k^{j}(X, Y)\end{eqnarray*}
According to the computation of confidence, we have
$C^k=\frac{\alpha}{\beta}$ and
$C^n=\frac{\alpha+\bar{\alpha}}{\beta+\bar{\beta}}$. Let
$Z=1-\frac{C^n-C^k}{C^n}=\frac{C^k}{C^n}$, that is,
\begin{equation*}
Z=\frac{\alpha(\beta+\bar{\beta})}{\beta(\alpha+\bar{\alpha})} \leq
1+\frac{\bar{\beta}}{\beta}
\end{equation*}

First, we have $\beta=\alpha+\sum_{i=1}^k
s_i[P(X\vDash\lambda_j,Y\nvDash\lambda_Y)]\geq\alpha$. Note that
$\alpha$ is the approximate support of the \md\ $\varphi$ with
matching similarity threshold pattern $\lambda_j$ on the attributes $X$.
According to the minimum support constraint, for a valid
$\lambda_j$, we have $\beta\geq\alpha\geq\eta_s$. Thereby,
$$Z\leq1+\frac{\bar{\beta}}{\eta_s}$$
Moreover, according to the condition $\bar{\beta}\leq \min
(\epsilon\eta_s, \frac{\epsilon\eta_s\eta_c}{1-\epsilon-\eta_c})$,
that is $\bar{\beta}\leq\epsilon\eta_s$, we have
$$Z\leq1+\epsilon$$

Second, similar to $\beta\geq\alpha$, we also have
$\bar{\alpha}\leq\bar{\beta}$ for the tuples from $k+1$ to $n$.
Therefore,
$$Z\geq\frac{\alpha(\beta+\bar{\beta})}{\beta(\alpha+\bar{\beta})}= \frac{\beta+\bar{\beta}}{\beta+\frac{\beta\bar{\beta}}{\alpha}}$$
According to the minimum confidence
$\frac{\alpha}{\beta}\geq\eta_c$,
\begin{equation}\label{equ_md_error_bound} Z\geq\frac{\beta+\bar{\beta}}{\beta+\frac{\bar{\beta}}{\eta_c}}
=1-\frac{\bar{\beta}(1-\eta_c)}{\beta\eta_c+\bar{\beta}}\end{equation}
Recall that $\beta\geq\eta_s$ and the confidence should be lower
than or equal to $1$, i.e., $\eta_c\leq1$. Thus,
$$Z\geq1-\frac{\bar{\beta}(1-\eta_c)}{\eta_s\eta_c+\bar{\beta}}=1-\frac{1-\eta_c}{\frac{\eta_c\eta_s}{\bar{\beta}}+1}$$
Since we have the condition
$\bar{\beta}\leq\frac{\epsilon\eta_s\eta_c}{1-\epsilon-\eta_c}$,
$$Z\geq1-\frac{1-\eta_c}{\frac{1-\epsilon-\eta_c}{\epsilon}+1}=1-\epsilon$$

Finally, based on the above two conditions, we conclude that
$$1+\epsilon\geq Z=1-\frac{C^n-C^k}{C^n}=\frac{C^k}{C^n}\geq 1-\epsilon$$
$$-\epsilon\leq \frac{C^n-C^k}{C^n} \leq \epsilon$$

On the other hand, according to the computation of support, we have
$S^k=\alpha$ and $S^n=\alpha+\bar{\alpha}$. Therefore,
$$\frac{S^n-S^k}{S^n} = \frac{1}{1+\frac{\alpha}{\bar{\alpha}}}$$
Recall that we have $\alpha\geq\eta_s$ and
$\bar{\alpha}\leq\bar{\beta}\leq\epsilon\eta_s$.
$$\frac{S^n-S^k}{S^n}\leq\frac{1}{1+\frac{1}{\epsilon}}=\frac{\epsilon}{1+\epsilon}<\epsilon$$
That is, the worst-case relative error is bounded by $\epsilon$ for
both the confidence and support.
\end{proof}

Now, we consider the last $n-k$ tuples in $\mathcal{D}$.
Let $$\bar{B}(k)=\sum_{i=k+1}^n s_i[P],$$
where $s_i[P]$ is the probability associated to each statistical tuple in
$\mathcal{D}$. Referring to the definition of $\bar{\beta}$,
for any $\lambda_j$, we always have $\bar{\beta}\leq\bar{B}(k)$. If
there exists a $k$ having $\bar{B}(k)\leq \min (\epsilon\eta_s,
\frac{\epsilon\eta_s\eta_c}{1-\epsilon-\eta_c})$, then
$\bar{\beta}\leq \min (\epsilon\eta_s,
\frac{\epsilon\eta_s\eta_c}{1-\epsilon-\eta_c})$ is satisfied for
all the threshold candidates $\lambda_j$. Since the $\bar{B}(k)$
decreases with the increase of $k$, to determine a minimum $k$ is to
find a corresponding maximum $\bar{B}(k)$.
Therefore, according to Lemma~\ref{lemma_md_appx_bound}, given an
error bound $\epsilon, 0<\epsilon<1-\eta_c$, we can compute a
minimum position $k=\arg\max_{k=1}^n \bar{B}(k)$ having
$\bar{B}(k)\leq \min (\epsilon\eta_s,
\frac{\epsilon\eta_s\eta_c}{1-\epsilon-\eta_c})$.

\begin{theorem}\label{the_approx}
Given an error bound $\epsilon, 0<\epsilon<1-\eta_c$, we can
determine a minimum $k$, having
$$\bar{B}(k)\leq \min
(\epsilon\eta_s, \frac{\epsilon\eta_s\eta_c}{1-\epsilon-\eta_c}),
1\leq k\leq n .$$ The approximation by considering first $k$
tuples in $\mathcal{D}$ finds approximate \mds\ with the error
bound $\epsilon$ on both the confidence and support compared with
the exact one. The complexity is $\mathcal{O}(kc)$.
\end{theorem}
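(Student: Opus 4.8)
The plan is to assemble the theorem from three ingredients: (i) a valid $k$ exists and the minimum such $k$ is cheaply computable; (ii) for that $k$ the hypothesis of Lemma~\ref{lemma_md_appx_bound} holds \emph{uniformly over all candidates}, which transfers the $\epsilon$-bound to every pattern the approximation reports; and (iii) the running time.

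For existence, write $\tau=\min(\epsilon\eta_s,\frac{\epsilon\eta_s\eta_c}{1-\epsilon-\eta_c})$. The assumption $0<\epsilon<1-\eta_c$ gives $1-\epsilon-\eta_c>0$, so both arguments of the $\min$ are strictly positive (using $\eta_s,\eta_c>0$, which is implicit, as otherwise the discovery problem is trivial), hence $\tau>0$. Since each $s_i[P]\geq0$, the suffix sum $\bar{B}(k)=\sum_{i=k+1}^{n}s_i[P]$ is non-increasing in $k$, with $\bar{B}(n)=0\leq\tau$. Therefore $\{k:\bar{B}(k)\leq\tau\}$ is a nonempty up-set of $\{1,\dots,n\}$ and has a least element $k^{\ast}$; because $\bar B$ is non-increasing, this $k^{\ast}$ is exactly $\arg\max_{k=1}^{n}\{\bar B(k):\bar B(k)\leq\tau\}$, matching the statement. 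It is found by scanning $i$ from $n$ downward while maintaining the running suffix sum and stopping at the first index where it would exceed $\tau$, costing $\mathcal{O}(n)$.

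For correctness, fix $k=k^{\ast}$ and any candidate $\lambda_j\in\mathcal{C}_t$. From the recursion defining $P_i^{j}(X)$, we get $\bar{\beta}=P_n^{j}(X)-P_k^{j}(X)=\sum_{i=k+1}^{n}s_i[P]\,\mathbf{1}[s_i[X]\vDash\lambda_j]\leq\sum_{i=k+1}^{n}s_i[P]=\bar{B}(k)\leq\tau$. Hence the premise $\bar{\beta}\leq\min(\epsilon\eta_s,\frac{\epsilon\eta_s\eta_c}{1-\epsilon-\eta_c})$ of Lemma~\ref{lemma_md_appx_bound} holds for \emph{every} candidate at once — the crucial point being that the single quantity $\bar B(k)$ dominates $\bar\beta$ for all $\lambda_j$ simultaneously, so one value of $k$ suffices for the whole candidate set. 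Applying Lemma~\ref{lemma_md_appx_bound} to each pattern reported by the approximation then yields $-\epsilon\leq\frac{C^n-C^k}{C^n}\leq\epsilon$ and $\frac{S^n-S^k}{S^n}\leq\epsilon$; that is, the support and confidence the $k$-truncated run reports for each returned \md\ differ from the exact values by relative error at most $\epsilon$. For complexity, the main loop runs the $\mathcal{O}(1)$-per-step recursion of Section~\ref{sect_optimal_discrete_exact} over the first $k$ statistical tuples for each of the $c=|\mathcal{C}_t|$ candidates, giving $\mathcal{O}(kc)$; the linear-time pre-grouping by $\lambda_Y$ and the $\mathcal{O}(n)$ computation of $k^{\ast}$ are dominated by (or merely additive to) this term, so the overall bound is $\mathcal{O}(kc)$.

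There is no deep obstacle here — the analytic heavy lifting is already in Lemma~\ref{lemma_md_appx_bound} — but two points deserve care in the write-up. First, one must verify $\tau>0$ from $\epsilon<1-\eta_c$ (and note that $k=n$ is always admissible) so that a valid $k$ genuinely exists. Second, and more conceptually, one should state precisely what ``finds approximate \mds\ with error bound $\epsilon$'' means: it is a guarantee on the relative accuracy of the reported support and confidence of each pattern (symmetrically, of each rejected pattern), not a guarantee that the returned set coincides with the exact set; a pattern whose true support or confidence lies within the $\epsilon$-band around $\eta_s$ or $\eta_c$ may be classified either way. I would add a sentence making this explicit so the guarantee is not overstated.
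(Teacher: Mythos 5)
Your proposal is correct and follows essentially the same route as the paper, which proves the theorem (in the prose surrounding its statement rather than a formal proof environment) by observing that $\bar{\beta}\leq\bar{B}(k)$ uniformly over all candidates $\lambda_j$, so a single $k$ with $\bar{B}(k)\leq\min(\epsilon\eta_s,\frac{\epsilon\eta_s\eta_c}{1-\epsilon-\eta_c})$ lets Lemma~\ref{lemma_md_appx_bound} be applied to every candidate at once, with the $\mathcal{O}(kc)$ cost coming from running the recursion over the first $k$ tuples for each of the $c$ candidates. Your added points---checking $\tau>0$ from $\epsilon<1-\eta_c$, noting $k=n$ is always admissible, and clarifying that the guarantee bounds the reported support and confidence rather than the identity of the returned set---are refinements of, not departures from, the paper's argument.
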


Finally, we present the approximation implementation in
Algorithm~\ref{alg_md_approx}. Let $\bar{B}$ denotes
$\bar{B}(k)=\sum_{i=k+1}^n s_i[P]$ for the current $k$. With $k$
decreasing from $n$ to $1$, we can determine a minimum $k$ where
$\bar{B}=\bar{B}(k)\leq\min (\epsilon\eta_s,
\frac{\epsilon\eta_s\eta_c}{1-\epsilon-\eta_c})$ is still satisfied.
After computing $k$, we process the tuples $s_i$ starting from
$i=1$. When the bound condition is first satisfied, i.e., $i=k$ with
$\bar{B}=\bar{B}(k)\leq\min (\epsilon\eta_s,
\frac{\epsilon\eta_s\eta_c}{1-\epsilon-\eta_c})$, the processing
terminates. Here, the error bound $\epsilon$ is specified by user
requirement with $0<\epsilon<1-\eta_c$.

 \begin{algorithm}[h]
 \caption{Approximation algorithm \textbf{AP}($\mathcal{D},\mathcal{C}_t$)} %
 \label{alg_md_approx} %
 \begin{algorithmic}[1]
    \FOR{each tuple $s_k\in \mathcal{D}, k:n\rightarrow 1$}
      \STATE $\bar{B}$ \texttt{+=} $s_k[P]$
      \IF{$\bar{B}>\min (\epsilon\eta_s,
 \frac{\epsilon\eta_s\eta_c}{1-\epsilon-\eta_c})$}
      \STATE $k$\texttt{++}; \textbf{break} \hfill \COMMENT{Compute $k$}
     \ENDIF
    \ENDFOR
    \FOR{each candidate $\lambda_j\in \mathcal{C}_t, j:1\rightarrow c$}
     \STATE $P_{0}^{j}(X, Y)=P_{0}^{j}(X)=0$
     \FOR{each tuple $s_i\in \mathcal{D}, i:1\rightarrow k$}
      \STATE compute $P_{i}^{j}(X, Y),P_{i}^{j}(X)$
     \ENDFOR
    \ENDFOR
    \RETURN $\lambda_j$ with confidence and support satisfying $\eta_c,\eta_s$
 \end{algorithmic}
 \end{algorithm}

Given an error bound $\epsilon$, the bound condition is then fixed.
In order to minimize $k$, we expect that the $P$ values of the
tuples from $k+1$ to $n$ in $\bar{B}(k)=\sum_{j=k+1}^n s_j[P]$ are
small. In other words, an instance of $\mathcal{D}$ with higher $P$
in the tuples from $1$ to $k$ is preferred.
Therefore, we can reorganize the tuples in $\mathcal{D}$ in the
decreasing order of $P$ as the input of
Algorithm~\ref{alg_md_approx}. The ordering of statistical tuples in
$\mathcal{D}$ by the $P$ values can be done in linear time by
amortizing the $P$ values into a constant domain.

\paragraph{Approximation Individually}
We study the approximation by each individual candidate $\lambda_j$
with a more efficient bound condition respectively.
According to formula (\ref{equ_md_error_bound}) in the proof of
error bound, we find that for each specific candidate $\lambda_j$ if
$\bar{\beta}\leq\min(\epsilon\beta,
\frac{\epsilon\beta\eta_c}{1-\epsilon-\eta_c})$, then the error
bound is already satisfied and the processing can be terminated for
this $\lambda_j$. Therefore, rather than one fixed bound condition
for all the candidates, the bound of $\bar{\beta}$ can be determined
dynamically for each candidate $\lambda_j$ respectively during the
processing. Algorithm~\ref{alg_md_approx_prune} shows the
implementation of approximation with dynamic bound condition on each
candidate $\lambda_j$ individually.

\begin{algorithm}[h]
\caption{Approximation individually \textbf{API}($\mathcal{D},\mathcal{C}_t$)} %
\label{alg_md_approx_prune} %
\begin{algorithmic}[1]
  \FOR{each tuple $s_i\in \mathcal{D},i:n\rightarrow 1$}
     \STATE $\bar{B}$ \texttt{+=} $s_i[P]$
     \IF{$\bar{B}\leq\min (\epsilon\eta_s,
\frac{\epsilon\eta_s\eta_c}{1-\epsilon-\eta_c})$}
     \STATE $k=i$ \hfill \COMMENT{Compute $k$}
    \ENDIF
   \ENDFOR
   \FOR{each candidate $\lambda_j\in \mathcal{C}_t, j:1\rightarrow c$}
    \STATE $P_{0}^{j}(X, Y)=P_{0}^{j}(X)=0$
    \STATE $\bar{B}_j=\bar{B}$
    \FOR{each tuple $s_i\in \mathcal{D}, i:1\rightarrow k$}
     \STATE compute $P_{i}^{j}(X, Y),P_{i}^{j}(X)$
     \STATE $\beta =P_{i}^{j}(X)$
     \STATE $\bar{B}_j$ \texttt{-=} $s_i[P]$
     \IF{$\bar{B}_j\leq\min (\epsilon\beta,\frac{\epsilon\beta\eta_c}{1-\epsilon-\eta_c})$}
      \STATE \textbf{break}
     \ENDIF
    \ENDFOR
   \ENDFOR
   \RETURN $\lambda_j$ with confidence and support satisfying $\eta_c,\eta_s$
\end{algorithmic}
\end{algorithm}

\begin{corollary}
The worst case complexity of the approximation individually is
$\mathcal{O}(kc)$
\end{corollary}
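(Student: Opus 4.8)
The plan is to read off the cost of Algorithm~\ref{alg_md_approx_prune} by separating it into its two phases, exactly as was done for Algorithm~\ref{alg_md_approx} in Theorem~\ref{the_approx}. First I would handle the opening loop (lines~1--6) that computes $k$: it scans the $n$ statistical tuples of $\mathcal{D}$ once, from $s_n$ down to $s_1$, accumulating $\bar{B}$ and overwriting $k$ with the current index whenever $\bar{B}\leq\min(\epsilon\eta_s,\frac{\epsilon\eta_s\eta_c}{1-\epsilon-\eta_c})$ still holds. Since $\bar{B}(i)=\sum_{j=i+1}^n s_j[P]$ is non-decreasing as $i$ decreases, the final value of $k$ is the minimum index for which the bound holds, i.e.\ the same $k$ as in Theorem~\ref{the_approx}; this phase costs $\mathcal{O}(n)$ and is absorbed into the main cost under the same accounting used there.

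The core of the argument is the nested loop of lines~7--18. The outer loop iterates over the $c$ candidate threshold patterns in $\mathcal{C}_t$. For a fixed $\lambda_j$, the inner loop visits the statistical tuples $s_1,s_2,\dots$ in increasing order and performs only constant work on each: the incremental updates of $P_i^{j}(X,Y)$ and $P_i^{j}(X)$ via the recursion of Section~\ref{sect_optimal_discrete_exact}, the assignment $\beta=P_i^{j}(X)$, the decrement of $\bar{B}_j$ by $s_i[P]$, and one comparison of $\bar{B}_j$ against $\min(\epsilon\beta,\frac{\epsilon\beta\eta_c}{1-\epsilon-\eta_c})$. The inner loop terminates either by \textbf{break} the first time this dynamic bound is met, or by exhausting the first $k$ tuples; in every case it runs at most $k$ iterations. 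Summing over the $c$ candidates gives $\mathcal{O}(kc)$ for this phase, hence a total of $\mathcal{O}(n+kc)$, reported as $\mathcal{O}(kc)$ just as in Theorem~\ref{the_approx}.

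The one point that needs a little care is the qualifier \emph{worst case}: I would argue that the per-candidate early termination cannot beat this bound, by exhibiting an input on which, for every candidate $\lambda_j$, the condition $\bar{B}_j\leq\min(\epsilon\beta,\frac{\epsilon\beta\eta_c}{1-\epsilon-\eta_c})$ fails for all $i<k$ while the global bound of Theorem~\ref{the_approx} is still tight at $k$ --- for instance when $P_i^{j}(X)$ (and hence $\beta$) stays tiny through the first $k-1$ of the reordered tuples, so that $\epsilon\beta$ never dominates the still-large residual $\bar{B}_j$. On such an instance the inner loop genuinely runs $k$ times for each of the $c$ candidates, so $\mathcal{O}(kc)$ is tight and the dynamic refinement in Algorithm~\ref{alg_md_approx_prune} is asymptotically no worse (and in favorable cases strictly cheaper) than Algorithm~\ref{alg_md_approx}. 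I expect this existence argument to be the only mildly delicate step; the remainder is the routine per-iteration bookkeeping sketched above.
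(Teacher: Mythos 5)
Your argument is correct and takes essentially the same route as the paper: both bound the inner loop by at most $k$ iterations of constant work per candidate, giving $\mathcal{O}(kc)$, and both identify the worst case as the one where no candidate's dynamic bound fires before $s_k$, so the algorithm degenerates to Algorithm~AP. The paper makes your tightness sketch slightly more precise by observing that whenever $\beta<\eta_s$ the individual bound $\min(\epsilon\beta,\frac{\epsilon\beta\eta_c}{1-\epsilon-\eta_c})$ is strictly tighter than the global bound and hence cannot be met for any $i<k$, which is exactly the scenario you propose to construct.
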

\begin{proof}
Note that with the increasing of $i$ from $1$ to $k$, for a specific
$\lambda_j$, the value $\beta$ increases and $\bar{B}_j$ decreases.
For any $i< k$, if $\beta<\eta_s$, i.e., $\lambda_j$ is invalid
currently, the bound condition cannot be satisfied having
$$\min
(\epsilon\beta, \frac{\epsilon\beta\eta_c}{1-\epsilon-\eta_c})<\min
(\epsilon\eta_s,
\frac{\epsilon\eta_s\eta_c}{1-\epsilon-\eta_c})<\bar{B}_j .$$ When
$\lambda_j$ has $\beta\geq\eta_s$ as a valid threshold, the bound
condition is relaxed from $\min (\epsilon\eta_s,
\frac{\epsilon\eta_s\eta_c}{1-\epsilon-\eta_c})$ to $\min
(\epsilon\beta, \frac{\epsilon\beta\eta_c}{1-\epsilon-\eta_c})$.
Thereby, the bound condition may be satisfied by a smaller $i$ than
$k$, i.e.,
$$\min (\epsilon\eta_s,
\frac{\epsilon\eta_s\eta_c}{1-\epsilon-\eta_c})<\bar{B}_j\leq\min
(\epsilon\beta, \frac{\epsilon\beta\eta_c}{1-\epsilon-\eta_c}).$$
The
worst case is that all candidates do not achieve their bounds until
processing the tuple $s_k$, where
$$\bar{B}_j = \bar{B}(k) \leq \min
(\epsilon\eta_s, \frac{\epsilon\eta_s\eta_c}{1-\epsilon-\eta_c})
\leq \min (\epsilon\beta,
\frac{\epsilon\beta\eta_c}{1-\epsilon-\eta_c})$$
must be satisfied.
This is exact the Algorithm~\ref{alg_md_approx} without individual
approximation.
\end{proof}

Finally, we cooperate the pruning by support together with the
approximation (namely \aps) and the approximation individually
(namely \apsi) respectively. As we presented in the experimental
evaluation, the approximation techniques can further improve the
discovering efficiency with an approximate solution very close to
the exact one (bounded by $\epsilon$).


\begin{figure*}[t]
\begin{minipage}[b]{0.33\linewidth}
    \centering
    \includegraphics[width=\linewidth]{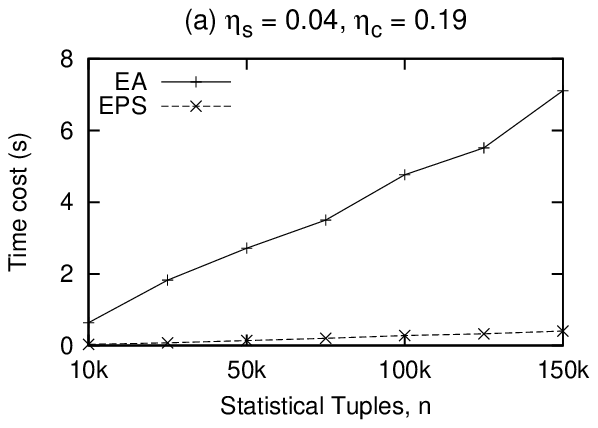}
    \includegraphics[width=\linewidth]{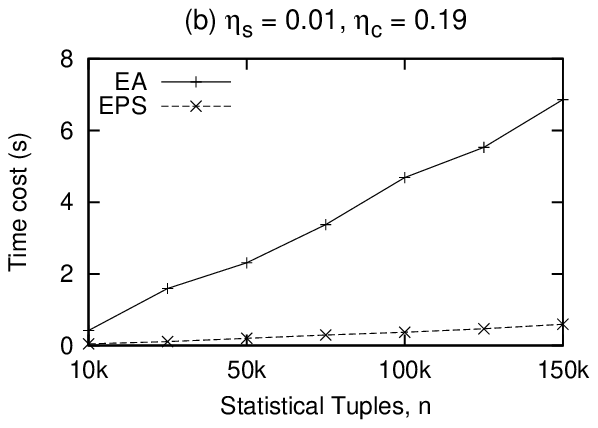}
    \caption{Pruning on \emph{CiteSeer} }
    \label{exp_eps_time_cite}
\end{minipage}%
\begin{minipage}[b]{0.33\linewidth}
    \centering
    \includegraphics[width=\linewidth]{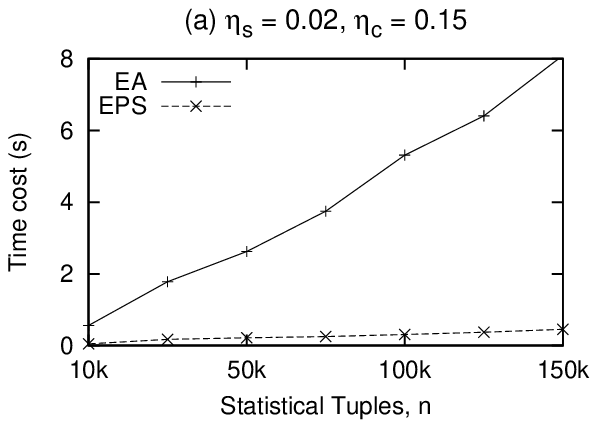}
    \includegraphics[width=\linewidth]{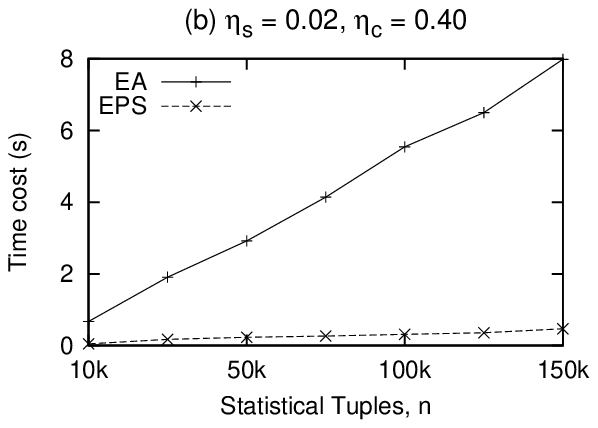}
    \caption{Pruning on \emph{Cora}}
    \label{exp_eps_time_cora}
\end{minipage}%
\begin{minipage}[b]{0.33\linewidth}
    \centering
    \includegraphics[width=\linewidth]{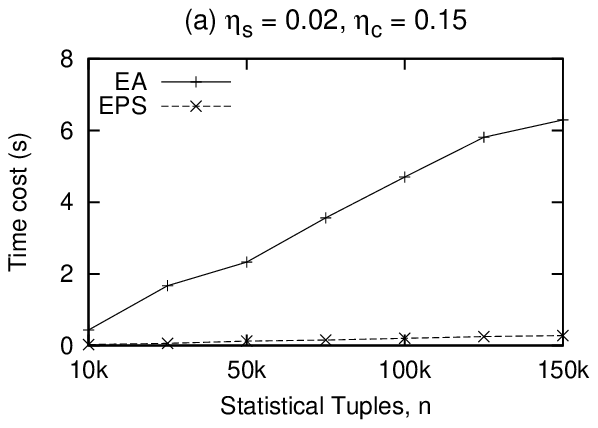}
    \includegraphics[width=\linewidth]{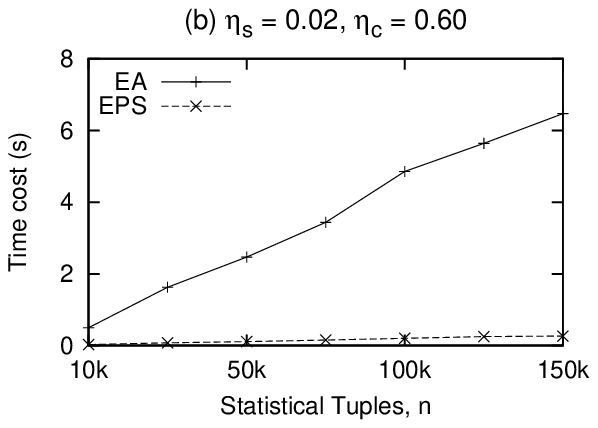}
    \caption{Pruning on \emph{Restaurant}}
    \label{exp_eps_time_rest}
\end{minipage}
\end{figure*}

\begin{figure*}[t]
\begin{minipage}[b]{0.33\linewidth}
    \centering
    \includegraphics[width=\linewidth]{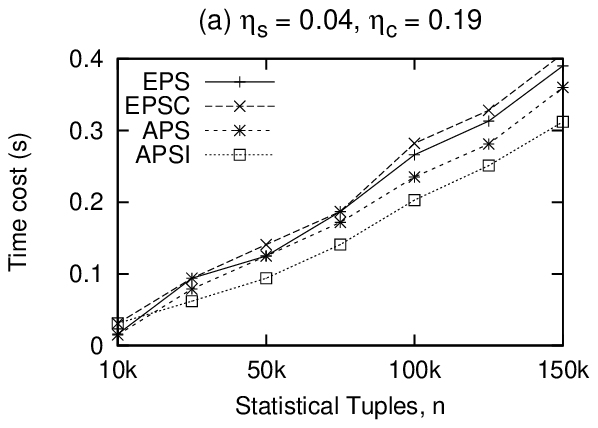}
    \includegraphics[width=\linewidth]{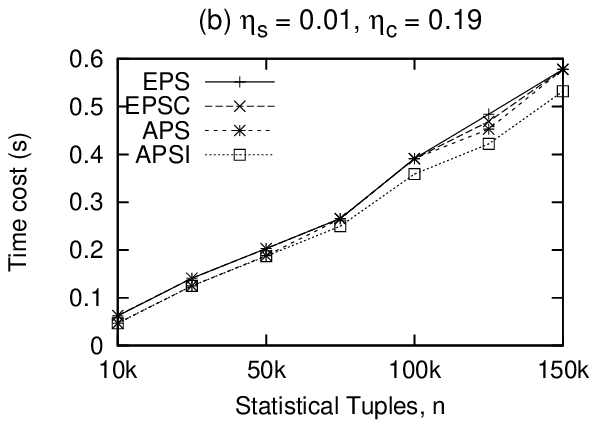}
    \caption{Advanced on \emph{CiteSeer} }
    \label{exp_advanced_time_cite}
\end{minipage}%
\begin{minipage}[b]{0.33\linewidth}
    \centering
    \includegraphics[width=\linewidth]{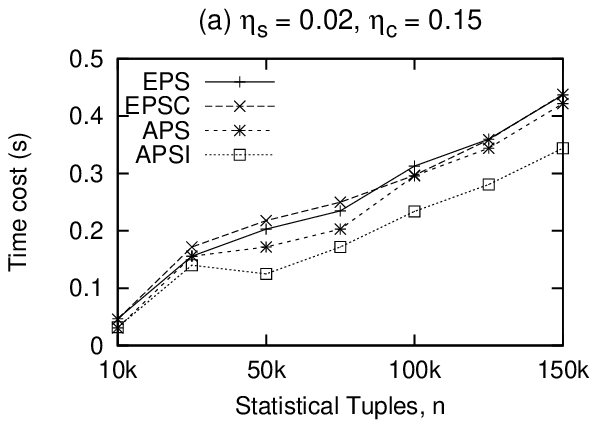}
    \includegraphics[width=\linewidth]{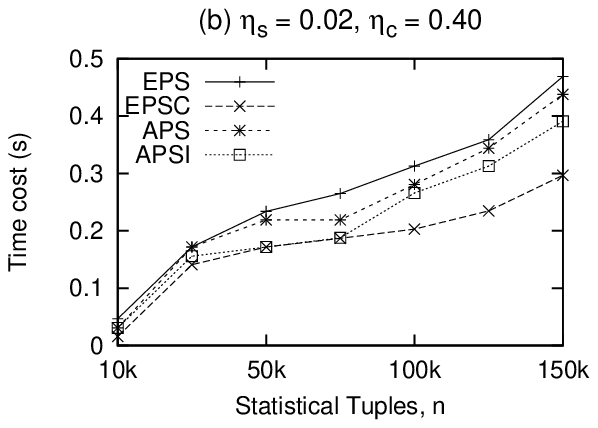}
    \caption{Advanced on \emph{Cora}}
    \label{exp_advanced_time_cora}
\end{minipage}%
\begin{minipage}[b]{0.33\linewidth}
    \centering
    \includegraphics[width=\linewidth]{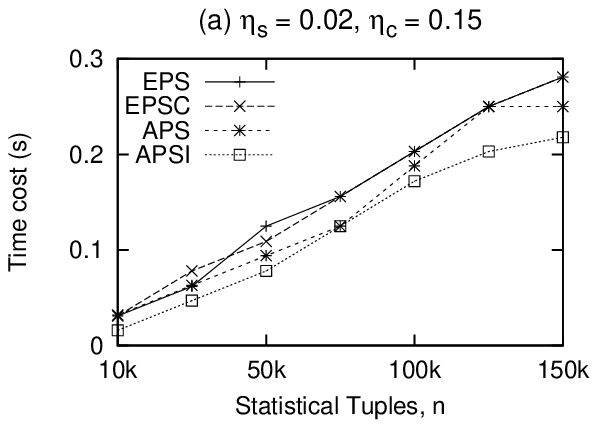}
    \includegraphics[width=\linewidth]{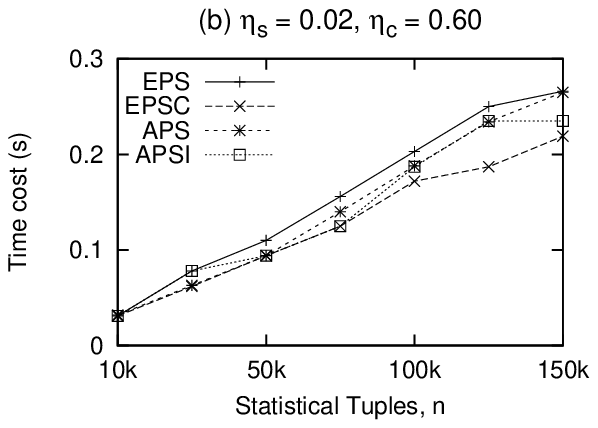}
    \caption{Advanced on \emph{Restaurant}}
    \label{exp_advanced_time_rest}
\end{minipage}
\end{figure*}

\section{Experimental Evaluation}\label{sect_experiment}

Now, we report the experiment evaluation on proposed methods.
All the algorithms are implemented by Java. The experiment evaluates
on a machine with Intel Core 2 CPU (2.13 GHz) and 2~GB of memory.

\paragraph{Experiment Setting}
%

In the experimental evaluation, we use three real data sets. The
\emph{Cora}\footnote{http://www.cs.umass.edu/\texttildelow mccallum/code-data.html}
data set, prepared by McCallum et
al.~\cite{DBLP:conf/kdd/McCallumNU00}, consists of 12 attributes
including $\mathsf{author, volume, title, institution, }$
$\mathsf{venue}$, etc.
The \emph{Restaurant}\footnote{http://www.cs.utexas.edu/users/ml/riddle/data.html} data set consists of restaurant records including attributes $\mathsf{name, address, city}$ and $\mathsf{type}$.
The \emph{CiteSeer}\footnote{http://citeseer.ist.psu.edu/} data set
is selected with attributes including $\mathsf{title, author,
address, affiliation, subject, description}$, etc.
We use the \emph{cosine} similarity to evaluate the matching quality of the
tuples in the original data.
By applying the $\mathsf{dom}(A)$ mapping in Section~\ref{sect_model}, we can obtain
statistical distributions with at most $186,031$ statistical tuples
in \emph{Cora}, $140,781$ statistical tuples
in \emph{Restaurant} and $314,382$ statistical tuples in \emph{CiteSeer}.
Our experimental evaluation is then conducted in several
pre-processed statistical distributions with various sizes of
statistical tuples $n$ from $10,000$ to $150,000$ respectively.

We mainly observes the efficiency of proposed algorithms. Since our main task
is to discover \mds\ under the required $\eta_s$ and
$\eta_c$, we study the runtime performance in various distributions
with different $\eta_s$ and $\eta_c$ settings. The discovery
algorithms determine the matching similarity settings of attributes
for \mds. Suppose that users want to discover \mds\ on the following $X\rightarrow Y$ of three data sets respectively: \textbf{i)} the dependencies on
$$Cora: \mathsf{author, volume, title\rightarrow venue}$$
with the preliminary requirement of
minimum similarity $0.6$ on $\mathsf{venue}$; \textbf{ii)} the dependencies on
$$Restaurant: \mathsf{name, address, type \rightarrow city}$$
with the preliminary requirement of
minimum similarity $0.5$ on $\mathsf{city}$;
and \textbf{iii)} the dependencies on
$$CiteSeer: \mathsf{address, affiliation, description\rightarrow subject}$$
with preliminary $0.1$ on $\mathsf{subject}$,
respectively.

A returned result is either infeasible, or a \md\ with threshold pattern on the given $X\rightarrow Y$, for example, one of the result returned by real experiment on \emph{Cora} is:
$$\varphi(\mathsf{author, volume, title\rightarrow venue}, <0.6,0.0,0.8,0.6>)$$
with $\mathsf{support}(\varphi)=0.020$ and  $\mathsf{confidence}(\varphi)=0.562$ both greater than the specified requirements of $\eta_s$ and $\eta_c$ respectively.

\paragraph{Exact Approach Evaluation}

First, we evaluate the performance of pruning by support (\eps)
compared with the original exact algorithm (\ea). As shown in (a)
and (b) in Figure~\ref{exp_eps_time_cite}, \ref{exp_eps_time_cora} and \ref{exp_eps_time_rest},
the \ea, which verifies all the possible candidates, should have the
same cost no matter how $\eta_s$ and $\eta_c$ set. Therefore, the time cost of \ea\ in (a) is exactly the same as that in (b) in all three data sets.

Moreover, the \eps\ achieves significantly lower time cost in all the statistical
distributions, which is only about $1/10$ of that of the \ea. These results demonstrate that our \eps\ approach can prune most of candidates without costly computation.
Note that the time costs of approaches increase linearly with data sizes, which shows the scalability of discovering \mds\ on large data.

To observe more accurately, we also plot the \eps\ time cost in
Figure~\ref{exp_advanced_time_cite}, \ref{exp_advanced_time_cora} and \ref{exp_advanced_time_rest} with the same settings
respectively. According to the pruning strategy, the \eps\
performance is only affected by support requirement $\eta_s$.
In other words, different $\eta_c$ settings take no effect on \eps. Thus, \eps\ has similar time costs in  Figure~\ref{exp_advanced_time_cora} (a) and (b) with the same $\eta_s$ but different $\eta_c$. Similar results can be observed in Figure~\ref{exp_advanced_time_rest} as well.

On the other hand, the \eps\ approach conducts the pruning based on the given requirement of support $\eta_s$. It is natural that a higher $\eta_s$ turns to the better pruning
performance. Therefore, \eps\ with $\eta_s=0.04$ in
Figure~\ref{exp_advanced_time_cite} (a) shows lower time cost, e.g., about
$0.4$s for $150$k, than that of $\eta_s=0.01$ in (b), e.g., $0.6$s
for the same $150$k.
Similar results with different $\eta_s$ are also observed on
\emph{Cora} and \emph{Restaurant}, which are not presented due to the limit of space.

\paragraph{Advanced Approach Evaluation}

Now, we report the performance of advanced pruning and approximation techniques in Figure~\ref{exp_advanced_time_cite}, \ref{exp_advanced_time_cora} and \ref{exp_advanced_time_rest},
including the pruning by both support and confidence (\epsc), the
approximation together with pruning by support (\aps), and the
approximation individually together with pruning by support (\apsi).

First, we study the influence of $\eta_c$ in different approaches.
When the confidence requirement $\eta_c$ is high, e.g., in
Figure~\ref{exp_advanced_time_cora} (b) and \ref{exp_advanced_time_rest} (b), the \epsc\ can remove those low confidence candidates and shows better time performance than other
approaches. On the other hand, when $\eta_c$ is small, e.g.,
$\eta_c=0.15$, we can have larger choices of
$\epsilon\in(0,1-\eta_c)$ such as $\epsilon=0.8$ in
Figure~\ref{exp_advanced_time_cora} (a) and \ref{exp_advanced_time_rest} (a). Thus, the approximation
approaches have lower time cost, especially the \apsi.
According to this analysis, we can choose \epsc\ in practical cases if the requirement $\eta_c$ is high; otherwise, the \apsi\ is preferred in order to achieve lower time costs.

According to the bound condition of approximation
approaches in Theorem~\ref{the_approx}, not only $\epsilon$, but also  the $\eta_s$ affects the
performance. As presented in Figure~\ref{exp_advanced_time_cite} (a), a
higher $\eta_s$ contributes a larger bound condition, which means
the early termination of the program. Thus, approximation approaches
show better performance in Figure~\ref{exp_advanced_time_cite} (a), having $\eta_s = 0.04$,
compared with Figure~\ref{exp_advanced_time_cite} (b), whose $\eta_s = 0.01$.

Note that the bound condition also depends on the distribution
features. A preferred distribution with more tuples in $\bar{\beta}$
can achieve the bound condition and terminate early, such as $50$k
in Figure~\ref{exp_advanced_time_cora} (a) with low time cost.

    \begin{figure*}[t]
    \centering
    \includegraphics[width=0.33\linewidth]{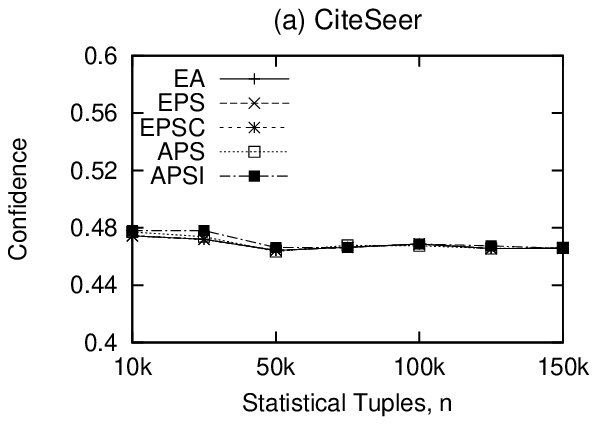}%
    \includegraphics[width=0.33\linewidth]{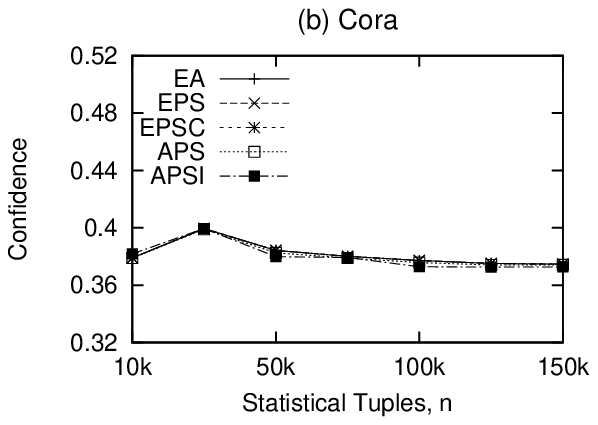}%
    \includegraphics[width=0.33\linewidth]{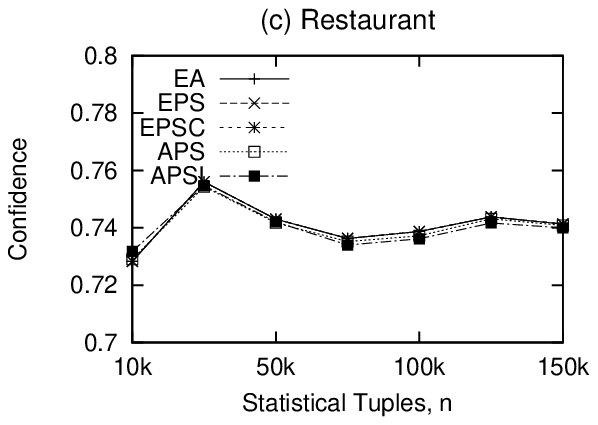}
    \caption{Relative error of approximate confidence}
    \label{exp_md_apprx_confidence}
    \centering
    \includegraphics[width=0.33\linewidth]{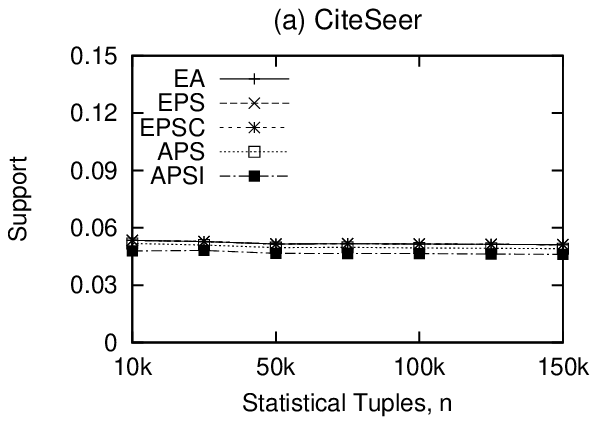}%
    \includegraphics[width=0.33\linewidth]{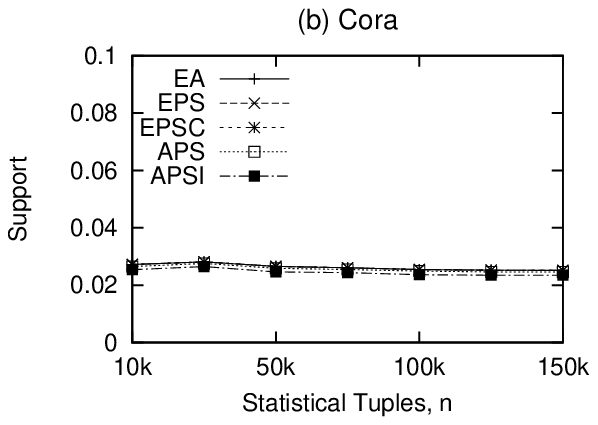}%
    \includegraphics[width=0.33\linewidth]{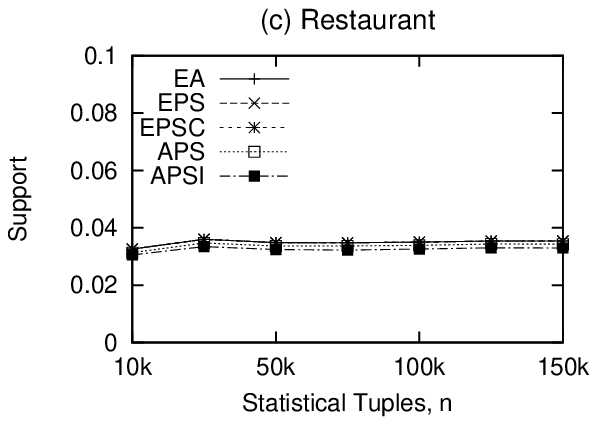}
    \caption{Relative error of approximate support}
    \label{exp_md_apprx_support}
    \end{figure*}

Finally, we evaluate the approximate confidence and support of the
returned \mds\ with $\epsilon=0.8$ on both two datasets in
Figure~\ref{exp_md_apprx_confidence} and \ref{exp_md_apprx_support}.
As we proved in Lemma~\ref{lemma_md_appx_bound}, the error
introduced in approximation approaches is bounded by $\epsilon$ on
both confidence and support. Therefore, in
Figure~\ref{exp_md_apprx_confidence} and \ref{exp_md_apprx_support},
the approximate confidence and support of \aps\ and \apsi\ are very
close to those of exact algorithms.

Consequently, the approximate algorithm can achieve low time cost
(e.g., in Figure~\ref{exp_advanced_time_cora} (a), \ref{exp_advanced_time_rest} (a) and
\ref{exp_advanced_time_cite} (a) with the same setting of
$\epsilon$) without introducing large variation in the confidence
and support measures compared with the exact ones.

\paragraph{Summary}

The experiment results demonstrate that our pruning and approximation techniques can significantly improve the efficiency of discovering \mds. \textbf{i)} The time costs of approaches increase linearly with data sizes, which shows the scalability of discovering \mds\ on large data. \textbf{ii)} The \eps\ approach can significantly reduce the time costs by pruning candidates, compared with the \ea. \textbf{iii)} If
the minimum confidence requirement $\eta_c$ is high, the pruning by confidence
works well. \textbf{iv)} Otherwise, we can employ the approximation approaches
to achieve low time cost.

\section{Conclusions}\label{sect_conclusion}

In this paper, we study the discovery of matching dependencies.
First, we formally define the utility evaluation of matching dependencies by
using support and confidence. Then, we introduce the problem of
discovering the \mds\ with minimum confidence and support
requirements. Both pruning strategies and approximation of the exact
algorithm are studied. The pruning by support can filter out the
candidate patterns with low supports. In addition, if the minimum
confidence requirement is high, the pruning by confidence works
well; otherwise, we can employ the approximation approaches to
achieve low time cost.
The experimental evaluation demonstrates
the performance of proposed methods.

Since this is the first work on discovering the matching
dependencies, there are many aspects of work to develop in the
future. For example, although the current approach can exclude the
attributes that are not necessary to a \md, another issue is to
minimize the number of attributes in the \md. However, the problem
of determining attributes for \fds\ is already
hard~\cite{DBLP:journals/cj/HuhtalaKPT99}, where the matching
similarity thresholds are not necessary to be considered. Moreover, two different \mds\ may cover different dependency semantics, which leads us to the problem of generating \mds\ set. Rather than a single \md, the utility evaluation of a \mds\ set is also interesting. Finally, and most
importantly, more exiting applications of \mds\ are
expected to be explored in the future work.
Finally, along the same line as evaluating \fds~\cite{DBLP:journals/tcs/KivinenM95,DBLP:conf/webdb/NambiarK04}, the \mds\ utility can also be measured by the smallest number of tuples that would have to be removed from the relation in order to eliminate all violations.

\bibliographystyle{abbrv}
{\small%
\bibliography{dependency}
}%



%
%
%
%
%
%
%
%
%
%


\end{document}